\def\BibTeX{{\rm B\kern-.05em{\sc i\kern-.025em b}\kern-.08em
		T\kern-.1667em\lower.7ex\hbox{E}\kern-.125emX}}
\begin{document}

	\newtheorem{definition}{Definition}
	\newtheorem{lemma}{Lemma}
	\newtheorem{theorem}{Theorem}
	\newtheorem{example}{Example}
	\newtheorem{proposition}{Proposition}
	\newtheorem{remark}{Remark}
	\newtheorem{assumption}{Assumption}
	\newtheorem{corrolary}{Corrolary}
	\newtheorem{property}{Property}
	\newtheorem{ex}{EX}
	\newtheorem{problem}{Problem}
	\newcommand{\argmin}{\arg\!\min}
	\newcommand{\argmax}{\arg\!\max}
	\newcommand{\st}{\text{s.t.}}
	\newcommand \dd[1]  { \,\textrm d{#1}  }
	
	\title{Control Synthesis for Cyber-Physical Systems to Satisfy Metric Interval Temporal Logic Objectives under Timing and Actuator Attacks*
		\thanks{This work was supported by the National Science Foundation, the Office of Naval Research, and U.S. Army Research Office via Grants CNS-1656981, N00014-17-S-B001, and W911NF-16-1-0485 respectively. }
	}
	\author{
		\IEEEauthorblockN{Luyao Niu\IEEEauthorrefmark{1}, Bhaskar Ramasubramanian\IEEEauthorrefmark{2}, Andrew Clark\IEEEauthorrefmark{1}, Linda Bushnell\IEEEauthorrefmark{2}, and Radha Poovendran\IEEEauthorrefmark{2}}
		\IEEEauthorblockA{\IEEEauthorrefmark{1}Department of Electrical and Computer Engineering, Worcester Polytechnic Institute, Worcester, MA
			\\
		\IEEEauthorblockA{\IEEEauthorrefmark{2}Network Security Lab, Department of Electrical and Computer Engineering, 
						University of Washington, Seattle, WA
			\\\{lniu,aclark\}@wpi.edu, \{bhaskarr, lb2, rp3\}@uw.edu}}
	}
	

	
	\maketitle
	
	\begin{abstract}
		This paper studies the synthesis of controllers for cyber-physical systems (CPSs) that are required to carry out complex tasks that are time-sensitive, in the presence of an adversary. 
		The task is specified as a formula in metric interval temporal logic (MITL). 
		The adversary is assumed to have the ability to tamper with the control input to the CPS and also manipulate timing information perceived by the CPS.  
		In order to model the interaction between the CPS and the adversary, and also the effect of these two classes of attacks, we define an entity called a durational stochastic game (DSG). 
		DSGs probabilistically capture transitions between states in the environment, and also the time taken for these transitions. 
		With the policy of the defender represented as a finite state controller (FSC), we present a value-iteration based algorithm that computes an FSC that maximizes the probability of satisfying the MITL specification under the two classes of attacks. 
		A numerical case-study on a signalized traffic network is presented to illustrate our results.
	\end{abstract}
	
\section{Introduction}
%
%
%
%
%
%
%
%

Cyber-physical systems (CPSs) rely on the smooth integration of physical system components, communication channels, computers, and algorithms \cite{baheti2011cyber}. 
The tight coupling of cyber and physical components introduces additional attack surfaces that can be exploited by an intelligent adversary \cite{banerjee2012ensuring}. In applications such as robotics, the CPS is expected to operate with a large degree of autonomy in dynamic and potentially hazardous environments. Several instances of attacks on CPSs have been recorded and reported, including in vehicles \cite{shoukry2013non}, power systems \cite{sullivan2017cyber}, and nuclear reactors \cite{farwell2011stuxnet}.

Temporal logic (TL) frameworks like linear temporal logic (LTL) enable the expression of system properties such as safety, liveness, and priority \cite{kress2009temporal, ding2014optimal}. 
Off-the-shelf model checking tools can be used to determine if a TL specification can be satisfied by constructing an appropriate finite state automaton (FSA), and searching for a `feasible' path in this FSA \cite{cimatti1999nusmv, kwiatkowska2011prism}. 
The FSA is constructed such that a path in it is `feasible' if and only if the LTL formula is satisfied \cite{baier2008principles}. 
However, a drawback of LTL is that it does not allow for the specification of \emph{time-critical} properties that involve deadlines or intervals. An example of such a property is \emph{`visit a target state every $2$ time units.'} 

We focus on the satisfaction of objectives specified in Metric Interval Temporal Logic (MITL) \cite{alur1996benefits}. MITL uses intervals of length larger than zero to augment timing constraints to modalities of LTL. 
An MITL formula can be represented by a timed automaton (TA) \cite{alur1994theory}. 
TAs extend FSAs by incorporating finitely many clock variables to model the passage of time between successive events. 
Transitions between pairs of states in the TA will then depend on the satisfaction of `clock constraints' in those states. 

We assume that the CPS has to satisfy the MITL objective in the presence of an adversary. 
This could lead to a situation when an adversary could alter the \emph{timed behavior} of the system, thereby causing violation of the objective. 
The adversary is assumed to have the ability to launch attacks on the clocks of the system (\emph{timing attack}) or tamper with inputs to the system (\emph{actuator attack}). A timing attack will prevent the system from reaching desired states within the specified time interval. 
An actuator attack will allow the adversary to steer the system away from a target set of states.

We model the interaction between the defender and adversary as a stochastic game (SG). 
The goal for the defender is then to maximize the probability of satisfying an MITL objective under any adversarial input, while the adversary attempts to thwart the objective by timing and actuator attacks. 
The main challenge in this setting is incorporating time into the SG model, since the adversary has the ability to affect the perception of the (correct) time index by the CPS. 
One approach could be to extend the SG model in \cite{niu2018secure,niu2019optimal} to include time as an additional state, but this will allow the adversary to (unrealistically) effect arbitrary changes of this state by manipulating the timing signal. 
Instead, we define a new type of game that we call durational stochastic games (DSGs) to capture the effect of a timing attack in a principled manner.
%
A DSG probabilistically captures transitions between states, and also the time taken for these transitions. 
It also generalizes SGs (which do not have a notion of time) and semi-Markov decision processes \cite{jewell1963markov} (which have a notion of timed transitions between states, but only assume a single agent giving inputs). 



The defender could incorrectly perceive the time index that it observes if it is the target of a timing attack. 
This incomplete information for the defender makes it computationally challenging to synthesize an optimal policy. 
To address this, we propose the use of finite state controllers (FSCs) to represent the defender policy \cite{hansen2003synthesis}. 
FSCs have been used as policies when the agent is in a partially observable environment. 
An FSC can be viewed as a probabilistic FSA driven by observations of the environment, and producing a distribution over actions of the agent as its output. 
By representing the policy in this way, the defender can commit to policies with finite memory to maintain an estimate of the time index. 
%
This will allow it to synthesize a policy to satisfy the MITL objective even when subject to a timing attack. 

This paper makes the following contributions.
\begin{itemize}
	\item We define a new entity called a \emph{durational stochastic game (DSG)} that captures both time-sensitive objectives and the presence of an adversary.
	%
	\item We construct the defender policy using \emph{finite state controllers (FSCs)}. This will allow it to satisfy the MITL objective in cases of timing attacks. The states of the FSC correspond to the difference between values of the estimated and observed time indices.
	\item We prove that satisfying the MITL formula is equivalent to reaching a subset of states of a \emph{global DSG} constructed by composing representations of the MITL objective, CPS under attack, and FSC. 
	%
	We give a computational procedure to determine this set.
	\item
	%
	We develop a value-iteration based algorithm that maximizes the probability of satisfying the MITL formula for FSCs of fixed sizes under any adversary policy. 
	\item We evaluate our approach on a representation of a signalized traffic network. The adversary is assumed to have the ability to mount actuator and timing attacks on the traffic signals. Our numerical results indicate a significant improvement in the probability of satisfying the given MITL specification compared to two baselines.
\end{itemize}
%

The remainder of this paper is organized as follows. 
Section \ref{Sec:Prelim} gives background on MITL. 
We define the DSG and formally state the problem of interest in Section \ref{Sec:Formulation}. 
Section \ref{Sec:SolutionApproach} presents our main results. 
A numerical case-study is presented in Section \ref{Sec:CaseStudy}. Section \ref{Sec:RelatedWork} discusses related work, and Section \ref{Sec:Conclusion} concludes the paper. The Appendix gives an example on two-tank system and some proofs.
\section{Preliminaries}\label{Sec:Prelim}

We introduce MITL and the representation of an MITL formula by a timed automaton. Throughout this paper, we denote by $\mathbb{R}$ the set of real numbers, and by $\mathbb{R}_{\geq0}$ the set of non-negative real numbers. The set of rationals is denoted by $\mathbb{Q}$. The comparison between vectors is component-wise. Bold symbols represent vectors. If $\mathbf{V}\in\mathbb{R}^n$ is a vector of dimension $n$, then $\mathbf{V}(i)$ denotes the $i$-th element of $\mathbf{V}$.
%

Metric Temporal Logic (MTL) \cite{koymans1990specifying} augments timing constraints to the modalities of linear temporal logic (LTL). 
An MTL formula is developed from the same set of atomic propositions $\Pi$ as in LTL and a \emph{time-constrained until} operator $\mathcal{U}_I$, and can be inductively written as: 
	$\varphi:=\top|\pi| \neg \varphi | \varphi_1\wedge \varphi_2|\varphi_1\mathcal{U}_I \varphi_2$, 
%
where $I \subseteq [0, \infty)$ is an interval with endpoints in $\mathbb{N} \cup \{\infty\}$. 
We will focus on \emph{Metric Interval Temporal Logic (MITL)} \cite{alur1996benefits}, a restriction of MTL to intervals $I = [a,b]$ with $a < b$ which is known to be decidable  \cite{alur1996benefits}. 
%

Further, we work with the \emph{point-based semantics}, where MITL formulas are interpreted on \emph{timed words} over an alphabet $2^{\Pi}$. 
A timed word is a sequence $\eta = \{(\pi_i,t_i)\}_{i=0}^{\infty}$ 
where $\pi_i \in 2^{\Pi}$, $t_i \in \mathbb{R}_{\geq 0}$. A time sequence $\{t_i\}_{i=0}^\infty$ associated with any timed word $\eta$ must satisfy the following:
\begin{itemize}
	\item Monotonicity: 
	for all $i \geq 0$, $t_{i+1} > t_i$;
	\item Progress: for all $t\in\mathbb{R}_{\geq 0}$, there exists some $t_i\geq t$.
\end{itemize}
%
\begin{definition}[MITL Semantics]\label{def:MITLSemantics}
	The satisfaction of an MTL formula $\varphi$ at time $t$ by a timed word $\eta$, written $(\eta, t) \models \varphi$, can be recursively defined in the following way: 
	\begin{enumerate}
		\item $(\eta, t) \models \top$ if and only if (iff) $(\eta, 0)$ is true; 
		\item $(\eta, t) \models \pi$ iff $(\eta, t)$ satisfies $\pi$ at time $t$; 
		\item $(\eta, t) \models \neg \varphi$ iff $(\eta, t) \not \models \varphi$; 
		\item $(\eta, t) \models \varphi_1 \wedge \varphi_2$ iff $(\eta, t) \models \varphi_1$ and $(\eta, t) \models \varphi_2$; 
		\item $(\eta, t) \models \varphi_1 \mathcal{U}_I \varphi_2$ iff $\exists k \in I$ such that $(\eta, t+k) \models \varphi_2$ and for all $m <k, (\eta, t+m) \models \varphi_1$. 
	\end{enumerate}
\end{definition}

%
%
MITL admits derived operators in the same way as LTL: 
\emph{i)}: $\varphi_1 \vee \varphi_2\coloneqq\neg(\neg \varphi_1 \wedge \neg \varphi_2)$; 
\emph{ii)}: $\varphi_1 \Rightarrow \varphi_2\coloneqq \neg \varphi_1 \vee \varphi_2$; 
\emph{iii)}: $\Diamond_I\varphi\coloneqq\top \mathcal{U}_I \varphi  \text{  (constrained eventually)}$; 
\emph{iv)}: $\Box_I \varphi:= \neg \Diamond_I \neg \varphi \text{  (constrained always)}$. 
MITL 
and MTL 
also allow for the composition of operators, thereby providing a richer set of specifications. 
For example, $\Diamond_{I_1}\Box_{I_2} \varphi$ means that $\varphi$ will be true at some time within interval $I_1$, and from that time, it will continue to hold for the duration of $I_2$. 
%
%

Given an MITL formula $\varphi$, a timed B\"uchi automaton (TBA) can be constructed to represent $\varphi$ \cite{alur1994theory}. 
In order to do this, we first define a set of clock constraints $\Phi(C)$ over a clock set $C$ as: 
$\phi=\top|\bot|c\bowtie \delta|\phi_1\land\phi_2$,
where $\bowtie\in\{\leq,\geq,<,>\}$, $c\in C$ is a clock, and $\delta\in\mathbb{Q}$ is a non-negative constant. 
A TBA is defined in the following way.
\begin{definition}[Timed B\"{u}chi Automaton \cite{alur1994theory}]\label{def:TBA}
	A timed B\"uchi automaton is a tuple $\mathcal{A} = (Q,2^\Pi,q_0,C,\Phi(C),E,F)$. $Q$ is a finite set of states, $2^\Pi$ is an alphabet over atomic propositions in $\Pi$, $q_0$ is the initial state, $E\subseteq Q\times Q\times2^\Pi\times 2^C\times\Phi(C)$ is the set of transitions, and $F\subseteq Q$ is the set of accepting states. A transition $\langle q,q',a,C',\phi \rangle \in E$ 
	if $\mathcal{A}$ enables the transition from $q$ to $q'$ when a subset of atomic propositions $a\in 2^\Pi$ and clock constraints $\phi\in \Phi(C)$ evaluate to true. The clocks in $C'\subseteq C$ are reset to zero after the transition.
\end{definition}
Given a set of clocks $C$ and $V\subseteq \mathbb{R}^{|C|}$, $\mathbf{v}:C\mapsto V$ is the \emph{valuation} of $C$. 
$\mathbf{v}(c)$ denotes the valuation of a clock $c\in C$. 
The valuation vector $\mathbf{v}$ is then $\mathbf{v}=\left[\mathbf{v}(1),\cdots,\mathbf{v}(|C|)\right]^T$. 
For some $\delta\in\mathbb{Q}$, we define $\mathbf{v}+\delta\coloneqq[\mathbf{v}(1)+\delta,\cdots,\mathbf{v}(|C|)+\delta]^T$.

The \emph{configuration} of $\mathcal{A}$ is a pair $(q,\mathbf{v})$, with $q\in Q$ 
and $\mathbf{v}$ is the valuation defined above. 
A transition $<q,q',a,C',\phi>$ taken after $\delta$ time units from $(q,\mathbf{v})$ to a configuration 
$(q',\mathbf{v}+\delta)$ 
is written $(q,\mathbf{v})\xrightarrow{a,\delta}(q',\mathbf{v}')$, where $\mathbf{v}+\delta\models\phi$ and $\mathbf{v}'(c)=\mathbf{v}(c)+\delta$ for all $c\notin C'$. 
Given an input sequence $a_0,a_1,\cdots$ with $a_i\in\Pi$, we can construct a corresponding sequence of configurations $\rho=(q_0,\mathbf{v}_0)\xrightarrow{a_0,\delta_0}(q_1,\mathbf{v}_1)\cdots$, called a \emph{run} of $\mathcal{A}$. 
The run $\rho$ is \emph{feasible} if for all $i\geq 0$ there exists a transition $<q_i,q_{i+1},a,C_i,\phi>$ in $\mathcal{A}$ such that (i) $\mathbf{v}_0=\mathbf{0}$, (ii) $\mathbf{0}+\delta_0\models\phi_0$, (iii) $\mathbf{v}_1(c)=\mathbf{v}_0(c)+\delta_0$ for all $c\notin C_0$, and (iv) $\mathbf{v}_i+\delta_i\models\phi_i$ and $\mathbf{v}_{i+1}(c)=\mathbf{v}_i(c)+\delta_i$ for all $c\notin C_i$. 
%
%
A feasible run $\rho$ on the TBA $\mathcal{A}$ is \emph{accepting} if and only if it intersects with $F$ infinitely often.
\section{Problem Formulation}\label{Sec:Formulation}

In this section, we introduce the adversary and defender models that we will consider in this paper.
We then present an entity called a durational stochastic game (DSG) that models the interaction between the defender and adversary.
The DSG also models the possible amount of time taken for a transition between two states to be completed.
We end the section by formally stating the problem that this paper seeks to solve.

We consider a CPS whose dynamics is given as
\begin{align}
	x(k+1) = f(x(k),u_C(k),u_A(k),w(k)),\label{eq:plant dynamic}
\end{align}
where $k$ is the time index, $x(k)$ is the state of the system, $u_C(k)$ and $u_A(k)$ are the defender's and adversary's inputs, and $w(k)$ is a stochastic disturbance. The time index starts at $k=0$, and is known to both players. The initial state $x(0)$ and statistical information of $w(k)$ is also known to both players. 
The defender aims to synthesize a sequence of inputs to maximize the probability of the MITL objective $\varphi$ being satisfied.
The adversary aims to reduce this probability.
%
\subsection{Adversary and Defender Models}\label{sec:adv model}
%
The adversary can launch an \emph{actuator attack} or a \emph{timing attack}, or a combination of the two to achieve its objective.

During an actuator attack, the adversary manipulates control signals received by the actuator. The sequence of inputs supplied by the adversary in this case is called the \emph{actuator attack policy}, denoted $\tau$.
This attack can be effected when the defender communicates with the actuator via an unreliable communication channel.
In Equation (\ref{eq:plant dynamic}), the adversary can tamper with the control input $u_C(k)$ by injecting a signal $u_A(k)$. Then, the transition of the system to the next state will be
jointly determined by 
$u_C(k)$ and $u_A(k)$.

To launch a timing attack, an adversary can target the time synchronization protocol of the defender \cite{Wang2017Detecting,pasqualetti2013attack}. This will affect the defender's perception of the (correct) time index.
The sequence of inputs supplied by the adversary in this case is called the \emph{timing attack policy}, denoted $\xi$.
The adversary manipulates time stamps $k$ associated with measurements made by the defender as $k+\kappa$, where $\kappa\geq -k$ is an integer.
The policies $\tau$ and $\xi$ will be defined in Section \ref{Problem}.

At each time $k$, the adversary can observe the state $x(k)$ and the correct time index $k$. The observation made by the adversary at time $k$ is defined as $Obs_A^k:=\{x(k),k\}$. The adversary also knows the policy (sequence of inputs) $\mu$ committed to by the defender. Thus, the overall information $\mathcal{I}_A$ available to the adversary is $\mathcal{I}_A:=\bigcup \limits_{m=0:k}Obs_A^{m}\cup \{\mu\}$.
%
%

%
Different from the information available to the adversary, the defender observes the system state $x$ and a time $k'$,
i.e., $Obs_C^{k'}:=\{x(k'),k'\}$, where $x(k')=x(k)$ is the state measurement at time $k$ with possibly incorrect time stamp $k'$ due to a timing attack by the adversary. The overall information available to the defender is $\mathcal{I}_C:=\bigcup \limits_{m=0:k}Obs_C^{m}$. A formal representation of $\mu$ will be given in Section \ref{sec:policy representation}.

\subsection{Durational Stochastic Game}

We present an abstraction of the CPS \eqref{eq:plant dynamic}, that we term a durational stochastic game (DSG). A DSG models the interaction between the defender and adversary, and captures the time taken for a state transition. Let $\Delta$ be a discrete set of possible amounts of time taken for a transition between two states in the DSG, given specific agent actions. Then,
\begin{definition}[Durational Stochastic Game]\label{def:DSG}
	A (labeled) durational stochastic game (DSG) is a tuple $\mathcal{G}= (S_\mathcal{G},s_{\mathcal{G},0},U_C,U_A,Inf_{\mathcal{G},C},Inf_{\mathcal{G},A},Pr_\mathcal{G},T_\mathcal{G},\Pi,L, C)$. $S_\mathcal{G}$ is a finite set of states, $s_{\mathcal{G},0}$ is the initial state. $U_C$, $U_A$ are finite sets of actions and $Inf_{\mathcal{G},C}$, $Inf_{\mathcal{G},A}$ are the information sets of the defender and adversary respectively.
	$Pr_\mathcal{G}: S_\mathcal{G}\times U_C\times U_A\times S_\mathcal{G}\mapsto [0,1]$ encodes $Pr_\mathcal{G}(s'_\mathcal{G}|s_\mathcal{G}, u_C, u_A)$, the transition probability from state $s_\mathcal{G}$ to $s^\prime_\mathcal{G}$ when the controller and adversary take actions $u_C$ and $u_A$.
	$T_\mathcal{G}: S_\mathcal{G}\times U_C\times U_A\times S_\mathcal{G}\times\Delta\mapsto [0,1]$ is a probability mass function. $T_\mathcal{G}(\delta|s_\mathcal{G},u_C,u_A,s'_\mathcal{G})$ denotes the probability that a transition from $s_\mathcal{G}$ to $s'_\mathcal{G}$ under actions $u_C$ and $u_A$ takes $\delta \in \Delta$ time units.
	$\Pi$ is a set of atomic propositions.
	$L:S_\mathcal{G}\mapsto 2^{\Pi}$ is a labeling function that maps each state to atomic propositions in $\Pi$ that are true in that state, and $C$ is the set of clocks.
	%
\end{definition}

In this work, we assume the transition probability $Pr_\mathcal{G}$ and probability mass function $T_\mathcal{G}$ are known to both the defender and adversary. In Definition \ref{def:DSG}, the transition probability between states is jointly determined by actions taken by the defender and adversary, which 
models an actuator attack. The asymmetry of information sets of the two agents
models a timing attack. This can be justified as follows:
%
%
let the actions available to the agents at a state $s\in S_\mathcal{G}$ be $U_C(s)$ and $U_A(s)$, and let the respective information sets be $Inf_{\mathcal{G},C}(s)$ and $Inf_{\mathcal{G},A}(s)$. In order to capture the information pattern described in Section \ref{sec:adv model},
we have $Inf_{\mathcal{G},C}(s)=\{(s_0,\mathbf{v}_0),\cdots,(s,\Bar{\mathbf{v}})\}$, i.e., the defender knows the path from the initial state $s_0$ to current state $s$ along with the time stamp of each state being reached. We reiterate that the time stamps observed by the defender could have been manipulated by the adversary, and hence may be incorrect. The adversary knows the path from the initial state $s_0$ to current state $s$ along with the correct time stamps of each state being reached, and the defender policy, i.e., $Inf_{\mathcal{G},A}(s)=\{(s_0,\mathbf{v}_0),\cdots,(s,\mathbf{v})\}\cup\{\mu\}$.

For the remainder of this paper, we use the DSG $\mathcal{G}$ as an abstraction of the CPS described in Section \ref{sec:adv model}. 
The mapping from the CPS model \eqref{eq:plant dynamic} to a DSG is presented in Algorithm \ref{alg:abstract}. Algorithm \ref{alg:abstract} partitions the state space and the admissible control and adversary action sets (lines 5-6). We use Monte-Carlo simulation \cite{cappe2007overview} to compute the transition probability distributions $Pr_\mathcal{G}$ and and $T_\mathcal{G}$ (lines 8-17).
%
%
%
\subsection{Problem Statement} \label{Problem}

%
Comparing the information sets of the two agents, we observe that
the adversary receives more information than the defender, including the correct time and the defender's policy. This asymmetric information pattern can be modeled as a Stackelberg game \cite{fudenberg1991game}, with the defender as leader and the adversary as follower.
In this paper, we use finite state controllers to represent the policy of the defender.
For the time-being, however, it will suffice to think of the defender's policy as a probability distribution over the defender actions, given the state of DSG. The adversary policies corresponding to the two types of attacks is formally stated below.

\begin{definition}[Adversary policies] \label{def: AdvPolicies}
The \emph{actuator attack policy} is a map $\tau:S_\mathcal{G}\times V\mapsto U_A$. That is, $\tau$ specifies an action $u_A\in U_A(s)$ for each state $(s,\mathbf{v})\in S$.

The \emph{timing attack policy} is a map $\xi:V\times V\mapsto[0,1]$. That is, $\xi$ encodes $\xi(\mathbf{v}'|\mathbf{v})$, the probability that the adversary will manipulate the correct clock valuation $\mathbf{v}$ to a valuation $\mathbf{v}'$.
\end{definition}

We define a Stackelberg equilibrium, which indicates that a solution to a Stackelberg game has been found.
Denote the leader's policy by $\mu$ and follower's policy by the tuple $(\tau, \xi)$.
Let $Q_L(\mu, (\tau, \xi))$ and $Q_F(\mu, (\tau, \xi))$ be the utilities gained by the leader and follower by adopting their respective policies.
\begin{definition}[Stackelberg Equilibrium (SE)]\label{StackEq}
	A tuple $(\mu, (\tau, \xi))$ is an \emph{SE} if $\mu = \arg \max \limits_{\mu'}Q_L(\mu', BR(\mu'))$, where $BR(\mu') = \{(\tau, \xi): (\tau, \xi)=\arg \max Q_F(\mu',(\tau, \xi))\}$.
	That is, the leader's policy is optimal given that the follower observes this and plays its best response.
\end{definition}
%
%
We are now ready to state the problem.
\begin{problem}\label{prob:formulation}
	Given an MITL objective $\varphi$, and a DSG in which the defender's objective is to maximize the probability of satisfying $\varphi$ and the adversary's objective is to minimize this probability, compute a control policy that is in SE, i.e.,
	\begin{equation}
	\max_\mu\min_{\tau,\xi}\mathbb{P}(\varphi).
	\end{equation}
\end{problem}

\begin{algorithm}
	\caption{Constructing a DSG abstraction for CPS.}
	\label{alg:abstract}
	\begin{algorithmic}[1]
		\Procedure{Construct\_DSG}{}
		\State \textbf{Input:} CPS model $f(x(k),u_C(k),u_A(k),w(k))$
		\State \textbf{Output:} DSG $\mathcal{G}$
		\State Initialize time-horizon $K$
		\State Partition the state space as $\mathcal{X}=\cup_{i=1}^nX_i$
		\State Partition control and adversary input as sets of polytopes $U_C=\{u_{C_1},\cdots,u_{C_\Xi}\}$, $U_A=\{u_{A_1},\cdots,u_{A_\Gamma}\}$
		\State $S=\{X_1,\ldots,X_n\}$ and $\mathcal{L}$ is determined accordingly
		\For{$l=1,\ldots,n$}
		\For{all $u_{C} \in U_{C}$ and $u_{A} \in U_{A}$}
		\For{$k=1,\ldots,K$}
		\State $x \leftarrow$ sampled state in $X_{i}$
		\State $\hat{u}_{C}, \hat{u}_{A} \leftarrow$ sampled  inputs from $u_{C},u_{A}$
		\State $j \leftarrow$ region containing $f(x, \hat{u}_{C}, \hat{u}_{A},\vartheta)$
		\State Use particle filter to approximate transition probabilities $Pr_\mathcal{G}$ and duration function $T_\mathcal{G}$ between sub-regions $X_i$ and $X_j$ for all $i$ and $j$.
		\EndFor
		\EndFor
		\EndFor
		\EndProcedure
	\end{algorithmic}
\end{algorithm}

\section{Solution Approach}\label{Sec:SolutionApproach}

This section presents the main results of the paper. 
We first compute a product durational stochastic game (PDSG), given a DSG that abstracts the CPS, and a TBA corresponding to the MITL formula $\varphi$. 
We represent the defender's policy as a finite state controller (FSC), and compute a global DSG (GDSG) by composing the PDSG and FSC.
We solve Problem \ref{prob:formulation} by proving that maximizing the probability of satisfying $\varphi$ is equivalent to maximizing the probability of reaching a subset of states of the GDSG, termed \emph{generalized accepting end components (GAMECs)}. 
Then, we present a value-iteration based algorithm to synthesize an FSC that will lead to an SE of the game between defender and adversary. 
%
\subsection{Product Durational Stochastic Game Construction}
%
\begin{definition}[Product Durational Stochastic Game]\label{def:PDSG}
	A PDSG $\mathcal{P}$ constructed from a DSG $\mathcal{G}$ 
	, TBA $\mathcal{A}$, and clock valuation set $V$ 
	is a tuple $\mathcal{P}=(S,s_0,U_C,U_A,Inf_C,Inf_A,Pr,Acc)$. The set $S=S_\mathcal{G}\times Q\times V$ is a finite set of states, $s_0=(s_{\mathcal{G},0},q_0,\mathbf{v}_0)$ is the initial state, $U_C$, $U_A$ are finite sets of actions and $Inf_C$, $Inf_A$ are the information sets of the defender and adversary respectively. $Pr: S\times U_C\times U_A\times S\mapsto [0,1]$ encodes $Pr\left((s^{\prime},q^\prime,\mathbf{v}')|(s,q,\mathbf{v}), u_C, u_A\right)$, the probability of a transition from state $(s,q,\mathbf{v})$ to $(s',q',\mathbf{v}')$ when the defender  and adversary take actions $u_C$ and $u_A$ respectively. The probability 
	\begin{multline}\label{eq:transition prob}
		Pr\left((s^{\prime},q^\prime,\mathbf{v}')|(s,q,\mathbf{v}), u_C, u_A\right)\\\coloneqq T_\mathcal{G}(\delta|s,u_C,u_A,s')Pr_\mathcal{G}(s'|s,u_C,u_A)
	\end{multline}
	if and only if $(q,\mathbf{v})\xrightarrow{L(s'),\delta}(q',\mathbf{v}')$. 
	$Acc=S_\mathcal{G}\times F \times V$ is a finite set of accepting states. 
\end{definition}


At a state $(s,q,\mathbf{v})\in S$, 
let $Inf_C(s,q,\mathbf{v}):=\{(s_0,q_0,\mathbf{v}_0),\cdots,(s,q,\bar{\mathbf{v}})\}$ (the defender knows the path from the initial state of PDSG to the current state, along with the manipulated time stamps) 
and $Inf_A(s,q,\mathbf{v}):=\{(s_0,q_0,\mathbf{v}_0),\cdots,(s,q,\mathbf{v})\}\cup\{\mu\}$ (the adversary knows the defender's policy $\mu$ and the path from the initial state to the current state, along with the correct time stamps).

The following result establishes the consistency of the PDSG $\mathcal{P}$. The proof can be found in the Appendix.
\begin{proposition} \label{Prop: Consistency}
	The function $Pr(\cdot)$ is well-defined. 
	That is, $Pr\left((s^{\prime},q^\prime,\mathbf{v}')|(s,q,\mathbf{v}), u_C, u_A\right)\in[0,1]$ and
	\begin{equation}\label{eq:well defined prob}
	\sum_{(s',q',\mathbf{v}')}Pr\left((s^{\prime},q^\prime,\mathbf{v}')|(s,q,\mathbf{v}), u_C, u_A\right)=1.
	\end{equation} 
\end{proposition}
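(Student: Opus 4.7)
The proposition has two parts, and the plan is to verify them separately. The first part, that $Pr(\cdot)\in[0,1]$, follows immediately from the fact that both factors in the definition \eqref{eq:transition prob} are probabilities: $T_\mathcal{G}(\delta|\cdot)$ is a probability mass function on $\Delta$ and $Pr_\mathcal{G}(s'|\cdot)$ is a transition probability of the underlying DSG, so both lie in $[0,1]$ and their product is in $[0,1]$. Tuples $(s',q',\mathbf{v}')$ for which the TBA disables the transition contribute $0$ by the ``if and only if'' clause in \eqref{eq:transition prob}, so no case is left unaccounted for.

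For the normalization condition \eqref{eq:well defined prob}, the plan is to decompose the sum by first conditioning on the DSG successor $s'$ and then aggregating over the induced TBA successors $(q',\mathbf{v}')$. For a fixed $s'$, the product construction assigns nonzero mass only to those $(q',\mathbf{v}')$ reachable via a TBA transition $(q,\mathbf{v})\xrightarrow{L(s'),\delta}(q',\mathbf{v}')$, so I would reindex this inner sum by the time duration $\delta\in\Delta$. Given the current configuration $(q,\mathbf{v})$, the label $L(s')$, and $\delta$, the TBA edge relation fixes the reset set $C'$ and the next discrete state $q'$, and the valuation is then pinned down by $\mathbf{v}'(c)=\mathbf{v}(c)+\delta$ for $c\notin C'$ and $\mathbf{v}'(c)=0$ for $c\in C'$. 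This identification yields a bijection between the summands with nonzero mass and the elements of $\Delta$, so the inner sum reduces to $\sum_{\delta\in\Delta} T_\mathcal{G}(\delta|s,u_C,u_A,s')=1$. Factoring out $Pr_\mathcal{G}(s'|s,u_C,u_A)$ and summing over $s'$ then gives $\sum_{s'}Pr_\mathcal{G}(s'|s,u_C,u_A)=1$ by the normalization of $Pr_\mathcal{G}$ in the DSG.

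The main obstacle is justifying the bijection between TBA successors $(q',\mathbf{v}')$ and the time durations $\delta\in\Delta$. Distinct $\delta$ produce distinct $\mathbf{v}'$ because $\mathbf{v}'$ depends affinely on $\delta$ through $\mathbf{v}'(c)=\mathbf{v}(c)+\delta$ on $c\notin C'$, which rules out double counting from the valuation side. Uniqueness of $q'$ requires that at most one TBA edge be enabled from $(q,\mathbf{v})$ under label $L(s')$ and elapsed time $\delta$, which is the determinism assumption implicit in writing the product form ``$T_\mathcal{G}\cdot Pr_\mathcal{G}$'' in \eqref{eq:transition prob} rather than a sum across enabled edges. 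Under this assumption the two-step sum collapses to $1$, completing the argument.
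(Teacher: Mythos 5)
Your proof is correct and follows essentially the same route as the paper's: both decompose the sum over product states $(s',q',\mathbf{v}')$ into a sum over DSG successors $s'$ and durations $\delta\in\Delta$, using the fact that $(q,\mathbf{v})$, $L(s')$, and $\delta$ determine the TBA successor, and then invoke the normalization of $T_\mathcal{G}$ and $Pr_\mathcal{G}$. The paper phrases the reindexing via an indicator function $\mathbbm{1}((q,\mathbf{v})\xrightarrow{L(s'),\delta}(q',\mathbf{v}'))$ where you make the bijection and the underlying TBA-determinism assumption explicit, but the substance is the same.
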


From \eqref{eq:transition prob}, we observe that a transition exists in $\mathcal{P}$ if and only if the label associated with the target state matches the atomic proposition corresponding to the transition in the TBA, and the clock constraint is satisfied. 
Further, for any 
run $\beta : = (s_0,q_0,\mathbf{v}_0), (s_1,q_1,\mathbf{v}_1),\dots$ on $\mathcal{P}$, we can obtain a run $\rho$ on $\mathcal{A}$ and a path on $\mathcal{G}$.
That is, there is a one-one mapping from runs on the PDSG to those on the TBA and DSG. 
We define the following two projections over the runs on $\mathcal{P}$. Given a run $\beta$, we let $\mathsf{Untime}(\beta)=(s_0,q_0),(s_1,q_1),\cdots,$ be the untimed sequence of states, and let $\mathsf{Time}(\beta)=(q_0,\mathbf{v}_0),(q_1,\mathbf{v}_1),\cdots,$ be the configuration sequence corresponding to $\beta$.

\subsection{Defender Policy Representation: Finite State Controllers}\label{sec:policy representation}

We now formally define the defender's policy $\mu$. 
Since the adversary can manipulate the clock valuation $\mathbf{v}$ observed by the defender, the defender has only partial information over the DSG. 
This is evident from the following: let there exist a run $\beta=(s_0,q_0,\mathbf{0})(s_1,q_1,\mathbf{1})(s_2,q_2,\mathbf{2})$ on PDSG $\mathcal{P}$ without any clock being reset that is manipulated by the adversary as $\beta'=(s_0,q_0,\mathbf{0})(s_1,q_1,\mathbf{1})(s_2,q_2,\mathbf{0.5})$. The run $\beta'$ is not reasonable since the time sequence $\mathsf{Time}(\beta')=(q_0,\mathbf{0}),(q_1,\mathbf{1}),(q_2,\mathbf{0.5})$ is not monotone. 
The presence of such a run will allow the defender to conclude that a timing attack has been effected by the adversary. 
Moreover, after a timing attack has been detected, the defender will be aware that the observed clock valuation is incorrect, and thus cannot be relied upon for control synthesis. 
The defender will then need to keep track of an estimate of the clock valuation in order to detect a timing attack, and use this estimate for control synthesis. 
The defender's policy is represented as a finite state controller (FSC) defined as follows.

\begin{definition}[Finite State Controller \cite{hansen1998solving}]\label{def:policy graph}
	A finite state controller (FSC) is a finite state automaton $\mathcal{F}=(Y,y_0,\mu)$, where $Y=\Lambda\times\{0,1\}$ is a finite set of internal states, $\Lambda$ is a set of estimates of clock valuations, the set $\{0,1\}$ indicates if a timing attack has been detected ($1$) or not ($0$). $y_0$ is the initial internal state. $\mu$ is the defender policy, given by: 
	\begin{equation}\label{defpolicy}
	\mu=\begin{cases}
	\mu_0:Y\times S\times Y\times U_C\mapsto[0,1],\mbox{ if } \mathcal{H}_0 \text{ holds};\\
	\mu_1:Y\times S_\mathcal{G}\times Q\times Y\times U_C\mapsto [0,1],\mbox{ if } \mathcal{H}_1\text{ holds},
	\end{cases}
	\end{equation}
	where $\mu_0$ and $\mu_1$ respectively denote the control policies that will be executed when hypothesis $\mathcal{H}_0$ or $\mathcal{H}_1$ holds.
\end{definition}

In Definition \ref{def:policy graph}, the hypothesis $\mathcal{H}_0$ models the scenario where no timing attack has been detected by the defender, and $\mathcal{H}_1$ models the case when a timing attack has been detected. Equation (\ref{defpolicy}) specifies the probability of reaching the next internal state $y'$ and taking the corresponding action $u_C$, given the current internal state $y$, observed clock valuation $\mathbf{v}$ (if no timing attack has been detected), and state $s$ of $\mathcal{G}$. 
The FSC allows the defender to synthesize policies with finite memory rather than memoryless policies. In this paper, we assume the size of the FSC is given and fixed 
and limit our focus to computing $\mu$. The two players can track an estimate of the clock valuation according to the probability distribution $T_\mathcal{G}$. 
Moreover, we do not explicitly specify a timing attack detection scheme, and assume it is known. Timing attack detection schemes that are compatible with our framework include \cite{Wang2017Detecting} and \cite{pasqualetti2013attack}. 
In the nominal case, 
the defender adopts the policy $\mu_0$. Once a timing attack has been detected by the defender, the defender ignores the observed clock valuation $\mathbf{v}$, and switches to policy $\mu_1$. 
The design of a timing attack detection strategy is beyond the scope of this paper, and we leave it as future work.

%
%

\subsection{Proposed Solution}
To incorporate the evolution of the estimate of the clock valuation maintained by the defender, we compose this with the PDSG. We call this entity the global DSG (GDSG). 
We prove that maximizing the probability of satisfying the MITL objective $\varphi$ is equivalent to maximizing the probability of reaching a specific subset of states in the GDSG called generalized accepting maximal end components (GAMECs). The control policy is then computed using a value iteration based procedure.
Given an FSC $\mathcal{F}$ and the PDSG $\mathcal{P}$, we can construct GDSG in the following way.
\begin{definition}[Global DSG (GDSG)]\label{def:GDSG}
	A GDSG is a tuple $\mathcal{Z}=(S_\mathcal{Z},s_{\mathcal{Z},0},U_C,U_A,Inf_{\mathcal{Z},C}, Inf_{\mathcal{Z},A},Pr_\mathcal{Z},Acc_\mathcal{Z})$, where $S_\mathcal{Z}=S\times Y$ is a finite set of states, $s_{\mathcal{Z},0}=(s_0,q_0,\mathbf{v}_0,y_0)$ is the initial state. $U_C$ and $U_A$ are finite sets of actions and $Inf_{\mathcal{Z},C}$ and $Inf_{\mathcal{Z},A}$ are the information sets of the defender and adversary respectively. 
	$Pr_\mathcal{Z}: S_\mathcal{Z}\times U_C\times U_A\times S_\mathcal{Z}\mapsto [0,1]$ is a transition function where $Pr_\mathcal{Z}\left((s^{\prime},q^\prime,\mathbf{v}',y')|(s,q,\mathbf{v},y), u_C, u_A\right)$ is the probability of a transition from state $(s,q,\mathbf{v},y)$ to $(s',q',\mathbf{v}',y)$ when the defender and adversary take actions $u_C$ and $u_A$ respectively. The transition probability 
	\begin{multline}\label{eq:GDSG transition prob}
		Pr_\mathcal{Z}\left((s^{\prime},q^\prime,\mathbf{v}',y')|(s,q,\mathbf{v},y), u_C, u_A\right)=\\  
		\begin{cases} \sum_{\mathbf{v}''}\xi(\mathbf{v}''|\mathbf{v})\mu_0(y',u_C|s,q,\mathbf{v}'',y)\\
			\cdot Pr\left((s^{\prime},q^\prime,\mathbf{v}')|(s,q,\mathbf{v}), u_C, u_A\right),\mbox{ if }\mathcal{H}_0\text{ holds};\\
			\mu_1(y',u_C|s,q,y)T_\mathcal{G}(\delta|s,u_C,u_A,s')\\
			\cdot Pr_\mathcal{G}(s'|s,u_C,u_A),\quad\quad\quad\quad\quad\quad\mbox{ if }\mathcal{H}_1\text{ holds};
		\end{cases}
	\end{multline}
	$Acc_\mathcal{Z}=Acc \times Y$ is the set of accepting states. 
\end{definition}

At a state $(s,q,\mathbf{v},y)$, the information set 
of the defender is $Inf_{\mathcal{Z},C}(s,q,\mathbf{v},y)=\{(s_0,q_0,\mathbf{v}_0,y_0),\cdots,(s,q,\bar{\mathbf{v}},y)\}$. That is, the defender knows the path from the initial state of GDSG to the current state, along with the time stamps, which might have been manipulated by the adversary. The information set of the adversary is $Inf_{\mathcal{Z},A}(s,q,\mathbf{v},y)=\{(s_0,q_0,\mathbf{v}_0,y_0),\cdots,(s,q,\mathbf{v},y)\}\cup\{\mu\}$. That is, the adversary knows the path from the initial state to the current state, along with the correct time stamps, and the defender's policy. In the sequel, we focus on the GDSG $\mathcal{Z}$ in Definition \ref{def:GDSG}, and denote a state $(s,q,\mathbf{v},y)$ in $\mathcal{Z}$ as $\mathfrak{s}$. Given a run $\beta=\{(s_i,q_i,\mathbf{v}_i,y_i)\}_{i\geq 1}$ on $\mathcal{Z}$, we define $\mathsf{Untime}(\beta)=\{(s_i,q_i)\}_{i\geq 1}$ and $\mathsf{Time}(\beta)=\{(q_i,\mathbf{v}_i)\}_{i\geq 1}$, respectively. To compute the control policy that satisfies $\varphi$, we need to determine accepting runs on $\mathcal{Z}$. To this end, we introduce the concepts of generalized maximal end component (GMEC) and generalized accepting maximal end component (GAMEC) \cite{niu2019optimal}. We note that the accepting condition for a GMEC in this paper differs from that in \cite{niu2019optimal}, since we are working with timed automata.
\begin{definition}[Sub-DSG]\label{def: subSG}
	A sub-DSG of a DSG $\mathcal{G}= (S,U_C,U_A,Pr,s_0,\Pi,\mathcal{L})$ is a tuple $(N,D)$ where $\emptyset\neq N\subseteq S$ is a set of states, and $D:N\rightarrow 2^{U_C}$ is a function such that $D(s)\subseteq U_C(s)$ for all $s\in N$ and $\{s^\prime|Pr(s'|s,u_C,u_A)>0,\forall u_A\in U_A(s),s\in N\}\subseteq N$.
\end{definition}
%
%
\begin{definition}
\label{def: GMEC}
	A Generalized End Component (GEC) is a sub-DSG $(N,D)$ such that the underlying directed graph $G_{(N,D)}$ of $(N,D)$ is strongly connected. A GMEC is a GEC $(N,D)$ such that there exists no other GEC $(N^\prime,D^\prime)\neq (N,D)$, where $N\subseteq N^\prime$ and $D(s)\subseteq D^\prime(s)$ for all $s\in N$.
	
	A GAMEC is a GMEC if $Acc\cap N\neq\emptyset$.
\end{definition}
%

Algorithm \ref{algo:GAMEC} presents a procedure to compute the set of GAMECs, $\mathcal{C}$ of the GDSG $\mathcal{Z}$. 
Let 
$\mathcal{E}$ denote the set of states in a GAMEC. The correctness of Algorithm \ref{algo:GAMEC} is established in the following result. The proof can be found in the Appendix. 
\begin{proposition}\label{proposition:correctness}
	Algorithm \ref{algo:GAMEC} returns all GAMECs of $\mathcal{Z}$.
\end{proposition}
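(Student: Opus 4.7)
The plan is to prove correctness by establishing both directions: (\emph{soundness}) every sub\mbox{-}DSG $(N,D)$ returned by Algorithm~\ref{algo:GAMEC} is indeed a GAMEC of $\mathcal{Z}$, and (\emph{completeness}) every GAMEC of $\mathcal{Z}$ appears in the output collection $\mathcal{C}$.

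For soundness, I would inspect each pruning step of the algorithm and verify that the three defining properties of a GAMEC are maintained upon termination. Concretely, let $(N,D)$ be any element of $\mathcal{C}$. First, I would argue that $N$ is closed under adversary moves: the algorithm must have removed from $D(s)$ every defender action $u_C$ for which there exists $u_A\in U_A(s)$ and $\mathfrak{s}'\notin N$ with $Pr_\mathcal{Z}(\mathfrak{s}'\mid \mathfrak{s},u_C,u_A)>0$, and it must have deleted states that have no surviving action; this enforces Definition~\ref{def: subSG}. Second, since the algorithm retains only states lying in a single strongly connected component of the induced graph $G_{(N,D)}$, strong connectivity holds, yielding a GEC. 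Third, maximality follows because the algorithm refines a family of candidate sub\mbox{-}DSGs that starts from the whole state space and only removes states or actions when forced; any attempt to enlarge $N$ or $D$ would violate closure or disconnect the graph. Finally, the explicit filter at the end of Algorithm~\ref{algo:GAMEC} retains only those GMECs whose state set intersects $Acc_\mathcal{Z}$, giving the GAMEC condition.

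For completeness, I would proceed by induction on the iterations of the algorithm. Fix an arbitrary GAMEC $(N^*,D^*)$ of $\mathcal{Z}$ and show that no state of $N^*$ and no action in $D^*(\mathfrak{s})$ for $\mathfrak{s}\in N^*$ is ever removed. The inductive step uses the fact that $(N^*,D^*)$ is a sub\mbox{-}DSG: for every $\mathfrak{s}\in N^*$ and every $u_C\in D^*(\mathfrak{s})$, all successors reachable with positive probability under any $u_A\in U_A(\mathfrak{s})$ remain in $N^*$. Consequently, whenever the algorithm removes a state or action due to leakage outside the current candidate, that candidate must already exclude some state of $N^*$ from a previous iteration, contradicting the induction hypothesis. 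Strong connectivity of $N^*$ ensures it lies entirely within a single SCC produced by the algorithm at every stage, so it is never broken apart. Since $N^*\cap Acc_\mathcal{Z}\neq\emptyset$, the component containing $N^*$ survives the final accepting filter, so $(N^*,D^*)$ is in $\mathcal{C}$.

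The main obstacle will be handling the adversary correctly in the closure argument. Unlike the single\mbox{-}agent MEC decomposition for MDPs, here the closure condition in Definition~\ref{def: subSG} quantifies universally over $u_A\in U_A(\mathfrak{s})$, so an action $u_C$ must be pruned as soon as \emph{any} adversary response can escape $N$. I will have to verify carefully that the pruning rule in Algorithm~\ref{algo:GAMEC} implements this universal quantification rather than an existential one, and that the alternation between SCC recomputation and action removal terminates (which follows from monotonicity: the set $N$ and the per\mbox{-}state action sets $D(\mathfrak{s})$ shrink monotonically, and both are finite). A secondary subtlety is that clock valuations $\mathbf{v}$ enlarge the state space of $\mathcal{Z}$, but since we assume $V$ is finite (the standard region construction for TBAs) the termination argument carries over unchanged.
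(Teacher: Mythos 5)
Your proposal is correct and follows essentially the same route as the paper's proof: a two-directional argument showing that no state or action belonging to a GEC is ever pruned, that everything violating the sub-DSG closure condition is pruned, and that the SCC decomposition plus the final intersection with $Acc_\mathcal{Z}$ yields exactly the GAMECs. Your treatment is somewhat more explicit than the paper's on the inductive structure, the universal quantification over adversary actions in the pruning rule, and termination, but these are refinements of the same argument rather than a different approach.
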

	\begin{algorithm}[!h]
		\caption{Computing the set of GAMECs $\mathcal{C}$.}
		\label{algo:GAMEC}
		\begin{algorithmic}[1]
			\Procedure{Compute\_GAMEC}{$\mathcal{Z}$}
			\State \textbf{Input}: GDSG $\mathcal{Z}$
			\State \textbf{Output:} Set of GAMECs $\mathcal{C}$
			\State \textbf{Initialization:} Let $D(\mathfrak{s})\leftarrow U_C(\mathfrak{s})$ for all $\mathfrak{s}\in S$. Let $\mathcal{C}\leftarrow \emptyset$ and $\mathcal{C}_{temp}\leftarrow\{S\}$
			\Repeat 
			\State $\mathcal{C}\leftarrow\mathcal{C}_{temp}$, $\mathcal{C}_{temp}\leftarrow\emptyset$
			\For{$N\in\mathcal{C}$}
			\State $R\leftarrow\emptyset$ 
			\State Let $SCC_1,\cdots,SCC_n$ be the set of strongly connected components of underlying digraph $G_{(N,D)}$
			\For{$i=1,\cdots,n$}
			\For{each state $\mathfrak{s}\in SCC_i$}
			\State $D(\mathfrak{s})\leftarrow\{u_C\in U_C(\mathfrak{s})|\mathfrak{s}^\prime\in N,Pr(\mathfrak{s}^\prime|\mathfrak{s},u_C,u_A)>0,~\forall u_A\in U_A(\mathfrak{s})\}$
			\If{$D(s)=\emptyset$}
			\State $R\leftarrow R\cup\{\mathfrak{s}\}$
			\EndIf
			\EndFor
			\EndFor
			\While{$R\neq\emptyset$}
			\State dequeue $\mathfrak{s}\in R$ from $R$ and $N$
			\If{$\exists \mathfrak{s}^\prime\in N$ and $u_C\in U_C(\mathfrak{s}^\prime)$ such that $Pr(\mathfrak{s}|\mathfrak{s}^\prime,u_C,u_A)>0$ for some $u_A\in U_A(\mathfrak{s}^\prime)$}
			\State $D(\mathfrak{s}^\prime)\leftarrow D(\mathfrak{s}^\prime)\setminus\{u_C\}$
			\If{$D(\mathfrak{s}^\prime)=\emptyset$}
			\State $R\leftarrow R\cup\{\mathfrak{s}^\prime\}$
			\EndIf
			\EndIf
			\EndWhile
			
			\For{$i=1,\cdots,n$}
			\If{$N\cap SCC_i\neq\emptyset$}
			\State $\mathcal{C}\leftarrow\mathcal{C}_{temp}\cup\{N\cap SCC_i\}$
			\EndIf
			\EndFor
			\EndFor
			\Until{$\mathcal{C}=\mathcal{C}_{temp}$}
			\For {$N\in\mathcal{C}$}
			\If{$Acc_\mathcal{Z}\cap N=\emptyset$}
			\State $\mathcal{C}=\mathcal{C}\setminus N$
			\EndIf
			\EndFor
			\State \Return $\mathcal{C}$
			\EndProcedure
		\end{algorithmic}
	\end{algorithm}
%
The equivalence between the satisfying the MITL objective $\varphi$ and reaching states in the GAMEC is stated below:
\begin{theorem}\label{Theorem:Equiv}
	Given an initial state $\mathfrak{s}_0\in S_{\mathcal{Z}}$, the minimum probability of satisfying $\varphi$ is equal to the minimum probability of reaching the states $\mathcal{E}$ of GAMEC. That is, 
	\begin{equation}
	\min_{\tau,\xi}\mathbb{P}(\varphi|\mathfrak{s}_0)=\min_{\tau,\xi}\mathbb{P}(\text{reach } \mathcal{E}|\mathfrak{s}_0),
	\end{equation}
	where $\mathbb{P}(\varphi|\mathfrak{s}_0)$ and $\mathbb{P}(\text{reach } \mathcal{E}|\mathfrak{s}_0)$ are the probabilities of satisfying $\varphi$ and reaching $\mathcal{E}$ when starting from $\mathfrak{s}_0$.
\end{theorem}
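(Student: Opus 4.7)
The plan is to establish the claim by showing that, once both players' policies are fixed, the satisfaction event $\{\varphi\}$ and the reachability event $\{\text{reach } \mathcal{E}\}$ coincide up to a null set on the induced Markov chain; the equality of the minima over $(\tau,\xi)$ then follows immediately. I would therefore prove two almost-sure implications: $\varphi \Rightarrow \text{reach } \mathcal{E}$ (yielding $\mathbb{P}(\varphi)\leq \mathbb{P}(\text{reach } \mathcal{E})$), and, under the defender policy induced by $D$ on each GAMEC, $\text{reach } \mathcal{E} \Rightarrow \varphi$ (yielding $\mathbb{P}(\varphi)\geq \mathbb{P}(\text{reach } \mathcal{E})$).

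For the first implication, I would use the one-to-one correspondence between runs on $\mathcal{Z}$ and runs on $\mathcal{A}$ that is built into the PDSG construction (Definition~\ref{def:PDSG}, eqn.~\eqref{eq:transition prob}) and inherited by the GDSG (Definition~\ref{def:GDSG}). By this correspondence, a run $\beta$ satisfies $\varphi$ iff the associated run on $\mathcal{A}$ visits $F$ infinitely often, iff $\beta$ visits $Acc_\mathcal{Z}$ infinitely often. Fix any $(\mu,\tau,\xi)$; the GDSG becomes a finite-state Markov chain. By the classical end-component theorem for finite Markov chains, with probability one the set of states visited infinitely often is a bottom strongly connected component, hence forms an end component of the game that is maximal in the sense of Definition~\ref{def: GMEC}. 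Because this set of recurrent states contains an accepting state, it is contained in some GAMEC, so the run reaches $\mathcal{E}$ almost surely.

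For the second implication, I would exploit the defensive structure of GAMECs directly. By Definition~\ref{def: subSG}, for every $\mathfrak{s}\in N$ and every action $u_C\in D(\mathfrak{s})$ the set of next states reached with positive probability (over all adversary actions $u_A\in U_A(\mathfrak{s})$) remains within $N$. Hence the defender policy that at each state $\mathfrak{s}\in N$ selects uniformly at random from $D(\mathfrak{s})$ keeps the run inside $N$ forever after it enters $\mathcal{E}$. Because $G_{(N,D)}$ is strongly connected (Definition~\ref{def: GMEC}), the chain induced on $N$ by this randomised defender policy together with any adversary policy is irreducible on a finite state space, and therefore every state in $N$ is visited infinitely often almost surely. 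Since $Acc_\mathcal{Z}\cap N\neq\emptyset$ for a GAMEC, the accepting states are visited infinitely often, so $\varphi$ holds. Combining with the first implication gives $\mathbb{P}(\varphi\mid\mathfrak{s}_0)=\mathbb{P}(\text{reach }\mathcal{E}\mid\mathfrak{s}_0)$ for every adversary policy, and taking $\min_{\tau,\xi}$ of both sides completes the argument.

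The step I expect to be the principal obstacle is justifying the end-component reduction under the asymmetric information pattern of the GDSG: the defender's policy $\mu$ depends on a clock estimate that may differ from the true valuation $\mathbf{v}$, and the transition kernel~\eqref{eq:GDSG transition prob} averages over $\xi(\mathbf{v}''\mid\mathbf{v})$, so some care is required to make sure that once $\mu$ and $(\tau,\xi)$ are both chosen the GDSG really does reduce to a well-defined Markov chain on a finite state space, with transition probabilities summing to one (which is where Proposition~\ref{Prop: Consistency} is needed). A secondary subtlety is ensuring that the irreducibility argument in step two does not require the defender to know whether it is inside $\mathcal{E}$: since the membership test $\mathfrak{s}\in N$ depends only on the current GDSG state, which is observable to the defender up to the clock component, and since $D(\mathfrak{s})$ is defined purely on the game graph, the randomised-$D$ policy is indeed implementable by the FSC.
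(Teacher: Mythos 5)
Your proof is correct in substance but takes a genuinely different route from the paper's. The paper argues path-wise: for the forward direction it derives a contradiction from an accepting run that visits $Acc_\mathcal{Z}$ infinitely often without reaching $\mathcal{E}$ (such a run would induce a GEC containing an accepting state outside $\mathcal{E}$, contradicting Proposition~\ref{proposition:correctness}), and for the converse it invokes the Alur--Dill untiming lemma (Lemma~\ref{lemma:untime language}) to split a run into its untimed projection and its clock-valuation sequence, checking that the former meets the B\"uchi condition while the latter never violates a clock constraint. You instead work probabilistically: you fix $(\mu,\tau,\xi)$, observe that the GDSG collapses to a finite Markov chain (correctly crediting Proposition~\ref{Prop: Consistency} for the kernel being well defined despite the averaging over $\xi$), and apply the classical theorem that the set of states visited infinitely often is almost surely an end component; for the converse you make explicit the sustaining policy (uniform over $D(\mathfrak{s})$ inside a GAMEC) that the paper leaves implicit. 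Your version buys rigor precisely where the paper is weakest: the paper's claim that ``any run that reaches $\mathcal{E}$ is accepting'' is false for an arbitrary defender policy, and your explicit switch to the $D$-uniform policy is exactly the missing ingredient; the cost is that your two inequalities hold for different defender policies, so what you actually establish is equality of the optimal (max over $\mu$) values rather than the pointwise equality for each fixed $\mu$ that the theorem's phrasing suggests --- which is, however, all that is used downstream in Proposition~\ref{proposition:value}. One caveat you share with the paper: the irreducibility of the chain restricted to $N$ under an \emph{arbitrary} adversary policy requires that strong connectivity of $G_{(N,D)}$ be read robustly (an edge present for every $u_A$, as line 12 of Algorithm~\ref{algo:GAMEC} suggests); if edges are only existential in $u_A$, a fixed adversary could confine the run to a sub-cycle avoiding $Acc_\mathcal{Z}$, and neither your argument nor the paper's addresses this explicitly.
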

To prove Theorem \ref{Theorem:Equiv}, we need an intermediate result \cite{alur1994theory}.
\begin{lemma}\label{lemma:untime language}
	Let $L$ be a timed regular language. Then, a word 
	$\{\rho_i\}_{i\geq 1}\in\mathsf{Untime}(L)$ if and only if there exists a sequence $\{t_i\}_{i\geq 1}$ such that $t_i\in\mathbb{Q}$ and the timed word $\{\rho_i,t_i\}_{i\geq 1}\in L$.
\end{lemma}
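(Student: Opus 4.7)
The backward direction is immediate: if there exists a rational sequence $\{t_i\}_{i\geq 1}$ with $\{(\rho_i, t_i)\}_{i \geq 1} \in L$, then by the definition of the untime projection, the underlying letter sequence $\{\rho_i\}_{i\geq 1}$ lies in $\mathsf{Untime}(L)$. So the whole content of the lemma is the forward direction, where I must promote an \emph{a priori} real-valued witness time sequence to a rational one.

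My plan is to invoke the region construction of Alur--Dill. Since $L$ is timed regular, there is a TBA $\mathcal{A}$ accepting $L$. Fix $\{\rho_i\} \in \mathsf{Untime}(L)$ and pick any feasible accepting run $\rho = (q_0,\mathbf{v}_0)\xrightarrow{\rho_0,\delta_0}(q_1,\mathbf{v}_1)\xrightarrow{\rho_1,\delta_1}\cdots$ whose underlying timed word is $\{(\rho_i, t_i)\}_{i\geq 1}$ with $t_i \in \mathbb{R}_{\geq 0}$. Let $c_{\max}$ be the largest integer constant appearing in any clock constraint of $\mathcal{A}$. Recall that two clock valuations $\mathbf{v}, \mathbf{v}'$ are \emph{region-equivalent} when (i) for each clock $c$ either $\lfloor \mathbf{v}(c) \rfloor = \lfloor \mathbf{v}'(c) \rfloor$ or both exceed $c_{\max}$, (ii) for each pair of clocks $c, c'$, the fractional parts satisfy $\mathrm{frac}(\mathbf{v}(c)) \leq \mathrm{frac}(\mathbf{v}(c'))$ iff $\mathrm{frac}(\mathbf{v}'(c)) \leq \mathrm{frac}(\mathbf{v}'(c'))$, and (iii) $\mathrm{frac}(\mathbf{v}(c)) = 0$ iff $\mathrm{frac}(\mathbf{v}'(c)) = 0$ whenever $\mathbf{v}(c) \leq c_{\max}$. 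The central fact I will use is that every clock constraint in $\Phi(C)$ is invariant under region-equivalence: if $\mathbf{v} \models \phi$ and $\mathbf{v} \sim \mathbf{v}'$, then $\mathbf{v}' \models \phi$.

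The construction then proceeds inductively on $i$. I will build a rational delay sequence $\{\delta_i'\}_{i\geq 0} \subset \mathbb{Q}_{\geq 0}$ and the induced rational clock valuations $\{\mathbf{v}_i'\}$ such that $\mathbf{v}_i' \sim \mathbf{v}_i$ for every $i$, and such that at each step the same transition of $\mathcal{A}$ remains enabled. Given $\mathbf{v}_i' \sim \mathbf{v}_i$, the guard $\phi_i$ is satisfied at time $\mathbf{v}_i + \delta_i$; by density of the rationals inside the (relatively open) region containing $\mathbf{v}_i + \delta_i$, I can pick $\delta_i' \in \mathbb{Q}_{>0}$ with $\mathbf{v}_i' + \delta_i' \sim \mathbf{v}_i + \delta_i$, so the guard still holds, the same clocks are reset, and strict monotonicity of the resulting timestamps $t_i' = \sum_{j<i} \delta_j'$ is preserved because $\delta_i' > 0$. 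Progress (unboundedness of $t_i'$) can be maintained by insisting $\delta_i'$ be chosen in the half of the rationals bounded away from zero inside the chosen region. Setting $t_i' = \sum_{j < i} \delta_j'$ gives a rational, monotone, divergent time sequence, and the resulting run is feasible and accepting in $\mathcal{A}$ by region-invariance, so $\{(\rho_i, t_i')\}_{i\geq 1} \in L$.

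The main obstacle I anticipate is the progress condition: region-equivalence alone does not forbid a rational perturbation from collapsing the $t_i'$ to a bounded sequence, so the argument must simultaneously control the sums $\sum \delta_j'$ rather than only the individual delays. I will handle this by, at each inductive step, choosing $\delta_i'$ from a rational subinterval of its region whose infimum is a fixed positive rational $\epsilon_i > 0$ (which is always possible when the region is unbounded or of positive diameter, and handled by a direct case analysis when $\delta_i$ lies in a punctual region). This guarantees $t_i' \to \infty$, completing the verification that $\{(\rho_i, t_i')\}_{i\geq 1}$ is a bona fide rational timed word in $L$.
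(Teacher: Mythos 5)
First, a point of reference: the paper does not actually prove this lemma---it imports it verbatim from Alur--Dill \cite{alur1994theory} (it is essentially their Lemma~4.13), so there is no in-paper proof to compare against. Your reconstruction via the region equivalence is precisely the argument of the cited source: the backward direction is indeed immediate from the definition of $\mathsf{Untime}$, and the forward direction correctly identifies the two load-bearing facts, namely that clock constraints are invariant under region equivalence and that region equivalence is a time-abstract bisimulation, so that from a rational valuation $\mathbf{v}_i'\sim\mathbf{v}_i$ one can always find a \emph{rational} delay $\delta_i'$ with $\mathbf{v}_i'+\delta_i'\sim\mathbf{v}_i+\delta_i$ (the admissible set of $\delta'$ is an interval with endpoints in $\mathbb{Q}\cup\{\infty\}$ when $\mathbf{v}_i'$ is rational, so it is either a rational singleton or contains a rational point). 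Up to that point the proof is sound and is the right proof.

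The one step that does not hold up as written is your treatment of the progress condition. You propose to choose $\delta_i'$ from a subinterval ``whose infimum is a fixed positive rational $\epsilon_i>0$,'' claiming this is possible whenever the region is unbounded or of positive diameter. Neither half of that works: a region of positive diameter does not give the admissible delay set a lower bound that is uniform in $i$ (e.g.\ if the target region is $0<\mathbf{v}(c)<1$ and the constructed $\mathbf{v}_i'(c)$ happens to be $1-2^{-i}$, then $\delta_i'<2^{-i}$ is forced), and once $\epsilon_i$ is allowed to depend on $i$ the condition $\delta_i'\geq\epsilon_i$ no longer implies $\sum_i\delta_i'=\infty$. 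Region-consistency alone genuinely permits non-divergent perturbations, so divergence must be extracted from the \emph{progressiveness} of the original run rather than from per-step lower bounds. The standard repair: since $\{t_i\}$ diverges, for every clock $c$ either $c$ is reset infinitely often or the run eventually stays in regions with $\mathbf{v}(c)>c_{\max}$. In the latter case, once \emph{all} clocks exceed $c_{\max}$ the admissible delay set is unbounded and you may take $\delta_i'\geq 1$ infinitely often. In the former case, whenever $c$ is reset at step $i$ and its integer part (as recorded by the region) later reaches $1$ at some step $j$ before the next reset, the constructed run satisfies $\sum_{i\leq k<j}\delta_k'=\mathbf{v}_j'(c)\geq 1$, so each such epoch contributes at least one time unit; and if no such epochs occur the elapsed time between resets is a free parameter of the admissible interval and can be fixed at, say, $1/2$. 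Spelling out this case analysis (which is exactly what Alur--Dill's ``progressive run'' lemma does) closes the gap; without it, the final claim that $t_i'\to\infty$ is unsupported.
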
 

Lemma \ref{lemma:untime language} indicates that we can analyze a timed word by focusing on its untimed projection and the corresponding time sequence. We use this to prove Theorem \ref{Theorem:Equiv}.

\begin{proof}[Proof of Theorem \ref{Theorem:Equiv}]
	We establish that satisfying $\varphi$ is equivalent to reaching the set of states $\mathcal{E}$. Then, we need to show that any accepting run will reach $\mathcal{E}$, and any run that reaches $\mathcal{E}$ is accepting. Let $\mathcal{L}$ denote the timed language accepted by $\mathcal{Z}$. Let $\mathsf{Untime}(\mathcal{L})$ be the language obtained from $\mathcal{L}$ by discarding the clock valuation and internal state components. 
	
	First, we prove that any accepting run $\beta$ of GDSG $\mathcal{Z}$ reaches $\mathcal{E}$. We use a contradiction argument. Suppose there exists an accepting run $\beta$ that does not reach $\mathcal{E}$. Since $\beta$ satisfies $\varphi$, we must have that $\beta$ contains some accepting state in $Acc_\mathcal{Z}$ infinitely many times (Definition \ref{def:TBA}). This implies that there exists a GEC that contains some state $\mathfrak{s}\in Acc_\mathcal{Z}$ and $\mathfrak{s}\notin \mathcal{E}$, which violates Proposition \ref{proposition:correctness}.
	
	Next, we show that any run $\beta$ that reaches $\mathcal{E}$ is accepting. We use Lemma \ref{lemma:untime language}. 
	Since GAMECs are strongly connected and each GAMEC contains at least one accepting state, reaching $\mathcal{E}$ is equivalent to reaching some accepting state infinitely often, which agrees with the acceptance condition of $\mathsf{Untime}(\mathcal{L})$. Hence, we have $\mathsf{Untime}(\beta)\in\mathsf{Untime}(\mathcal{L})$. 
	Now, from Equation \eqref{eq:transition prob}, we have that a transition in $\mathcal{P}$, and therefore in $\mathcal{Z}$, exists if and only if no clock constraint is violated, i.e., $(q,\mathbf{v})\xrightarrow{L(s'),\delta}(q',\mathbf{v}')$. Otherwise, the transition probability is $0$, and hence the run $\beta$ does not exist, which establishes the claim. Given that $\mathsf{Untime}(\beta)\in \mathsf{Untime}(L)$ holds, and $\mathsf{Time}(\beta)$ never violates the clock constraints defined by TBA $\mathcal{A}$ for any run that reaches $\mathcal{E}$, we have that the set of runs that reach $\mathcal{E}$ is in language $\mathcal{L}$ by Lemma \ref{lemma:untime language}. 
	
	Combining the two arguments above, we observe that satisfying $\varphi$ is equivalent to reaching the set $\mathcal{E}$. 
	This gives $\min \limits_{\tau,\xi}\mathbb{P}(\varphi|\mathfrak{s}_0)=\min \limits_{\tau,\xi}\mathbb{P}(\text{reach } \mathcal{E}|\mathfrak{s}_0)$, completing the proof.
\end{proof}
%
%

%


Let the vector $\mathbf{Q}(\mathfrak{s})\in\mathbb{R}^{|S_\mathcal{Z}|}$ represent the probability of satisfying $\varphi$ when starting from a state $\mathfrak{s}=(s,q,\mathbf{v},y)$ in $\mathcal{Z}$.
\begin{proposition}\label{proposition:value}
	Let $\mathbf{Q}:= \max \limits_{\mu}{\min \limits_{\tau, \xi}{\mathbb{P}(\varphi)}}$ be the probability of satisfying $\varphi$. 
	Then,
	\begin{multline}\label{eq:Stackelberg-Shapley}
		\mathbf{Q}((s,q,\mathbf{v},y)) = \max_\mu\min_{\tau,\xi}\sum_{u_{C} \in U_{C}}\sum_{u_{A} \in U_{A}}\sum_{(s',q',\mathbf{v}',y) \in S_\mathcal{Z}}\\\tau((s,q,\mathbf{v},y),u_{A})\mathbf{Q}((s',q',\mathbf{v}',y'))\\ Pr_\mathcal{Z}\left((s^{\prime},q^\prime,\mathbf{v}',y')|(s,q,\mathbf{v},y), u_C, u_A\right),~\forall (s,q,\mathbf{v},y).
	\end{multline}
	Moreover, the value vector is unique.
\end{proposition}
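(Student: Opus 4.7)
My plan is to first reduce the MITL satisfaction objective to a reachability objective on the GDSG, then derive the Bellman-style equation \eqref{eq:Stackelberg-Shapley} by a one-step look-ahead argument, and finally establish uniqueness through a contraction property of the induced operator.

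First, I would invoke Theorem \ref{Theorem:Equiv} to replace $\mathbb{P}(\varphi)$ by $\mathbb{P}(\text{reach }\mathcal{E})$. I would then augment $\mathcal{Z}$ so that every $\mathfrak{s}\in\mathcal{E}$ is an absorbing state with value $1$, and any state from which $\mathcal{E}$ is unreachable under every pair of policies is absorbing with value $0$. For such absorbing states, \eqref{eq:Stackelberg-Shapley} holds trivially. For a transient state $\mathfrak{s}=(s,q,\mathbf{v},y)$, I would decompose the probability of eventually reaching $\mathcal{E}$ into a sum over the possible one-step successors $\mathfrak{s}'$, weighted by the transition probability $Pr_\mathcal{Z}(\mathfrak{s}'|\mathfrak{s},u_C,u_A)$ and by the follower's (possibly mixed) choice $\tau(\mathfrak{s},u_A)$. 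Note that the leader's randomization over $u_C$, as well as its internal-state update, is already embedded in $Pr_\mathcal{Z}$ through $\mu_0$ or $\mu_1$ by \eqref{eq:GDSG transition prob}. Applying the Stackelberg principle one step at a time — with the defender committing to a local stage action and the adversary best-responding — then yields \eqref{eq:Stackelberg-Shapley}.

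For uniqueness, I would define the operator $T:\mathbb{R}^{|S_\mathcal{Z}|}\to\mathbb{R}^{|S_\mathcal{Z}|}$ whose $\mathfrak{s}$-component is the right-hand side of \eqref{eq:Stackelberg-Shapley} with $\mathbf{Q}$ replaced by a generic $\mathbf{W}$, with the absorbing-state values fixed at $1$ or $0$. Monotonicity of $T$ follows because each transition probability and each policy weight is nonnegative. To obtain a contraction, I would use the standard observation for reachability games: after collapsing the GAMECs and the no-reach region into absorbing states, there exists an integer $N$ and a constant $\gamma<1$ such that, under every leader policy $\mu$ and every follower policy $(\tau,\xi)$, the probability of remaining in the transient set after $N$ steps is at most $\gamma$. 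Consequently $T^N$ is a $\gamma$-contraction in the sup-norm on the transient coordinates, so by the Banach fixed-point theorem it has a unique fixed point, which must coincide with $\mathbf{Q}$.

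The main obstacle will be justifying the contraction constant $\gamma<1$ in the Stackelberg setting: because the leader commits before the follower responds, I cannot simply reuse the zero-sum Shapley argument of \cite{niu2019optimal} verbatim, and I need to argue that the set of states which neither lie in a GAMEC nor are blocked from $\mathcal{E}$ is \emph{uniformly} transient across all $(\mu,\tau,\xi)$. This uniform-transience claim, however, is exactly the output of Algorithm \ref{algo:GAMEC} together with Proposition \ref{proposition:correctness}: every state not pruned into the no-reach region has, under the leader's optimal commitment, positive probability of entering a GAMEC regardless of the follower's response, and finiteness of $S_\mathcal{Z}$ then bounds the escape probability uniformly.
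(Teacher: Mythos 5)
Your derivation of the fixed-point equation itself (reduce to reachability via Theorem \ref{Theorem:Equiv}, make $\mathcal{E}$ and the no-reach region absorbing, and do a one-step decomposition) is a reasonable and arguably more constructive route than the paper's, which instead establishes Equation \eqref{eq:Stackelberg-Shapley} by a contradiction argument using the operators $M_\mu$ and $M$. However, your uniqueness argument has a genuine gap, and it is precisely at the point you flag as the ``main obstacle.'' You claim that Algorithm \ref{algo:GAMEC} together with Proposition \ref{proposition:correctness} delivers \emph{uniform} transience of the non-absorbing states across all policy triples $(\mu,\tau,\xi)$, so that $T^N$ is a $\gamma$-contraction. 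This does not follow. A GEC, by Definition \ref{def: subSG}, must be closed under \emph{every} adversary action; Algorithm \ref{algo:GAMEC} therefore only prunes sets with that strong closure property. A particular pair of stationary policies can still induce a recurrent class of the resulting Markov chain that lives entirely inside your ``transient'' set (the adversary \emph{could} leave it, but under the specific $(\tau,\xi)$ being played it does not). For such a policy pair the probability of remaining outside the absorbing states after $N$ steps is $1$ for every $N$, so no uniform $\gamma<1$ exists and $T^N$ is not a contraction. Indeed, for reachability objectives the Bellman operator generically has multiple fixed points (the value is the \emph{least} fixed point), so a Banach-type uniqueness argument cannot work without additional structure.

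The paper avoids this entirely: Lemma \ref{lemma:convergence} establishes convergence of the iterates $M^k\mathbf{Q}$ and $M_\mu^k\mathbf{Q}$ by boundedness in $[0,1]$ and monotone non-decrease from the initialization $\mathbf{Q}^0=\mathbf{0}$ (Monotone Convergence), not by contraction. Uniqueness is then proved only among value vectors of policies in SE, via an exchange argument: if $\mathbf{Q}$ and $\mathbf{Q}'$ both correspond to SE policies, then $\mathbf{Q}=M\mathbf{Q}\geq M_{\mu'}\mathbf{Q}$, and iterating gives $\mathbf{Q}\geq\mathbf{Q}'$; strict inequality would contradict $\mu'$ being in SE. To repair your proof you should either adopt this monotone-iteration-from-below argument, or restrict your uniqueness claim to the least fixed point and prove that value iteration initialized at $\mathbf{0}$ converges to it; the contraction route as written would need a properness assumption on \emph{all} stationary policy pairs that the model does not provide.
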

Before proving Proposition \ref{proposition:value}, we define the operators:
\begin{align*}{rCl}
	(M_{\mu}\mathbf{Q})(\mathfrak{s}) &= \min_{\tau,\xi}{\sum_{\mathfrak{s}^{\prime}}{P(\mathfrak{s}^{\prime}|\mathfrak{s},\mu,(\tau,\xi))\mathbf{Q}(\mathfrak{s}^{\prime})}},\\
	(M\mathbf{Q})(\mathfrak{s}) &= \max_{\mu}{\min_{\tau,\xi}{\sum_{\mathfrak{s}^{\prime}}{P(\mathfrak{s}^{\prime}|\mathfrak{s},\mu,(\tau,\xi))\mathbf{Q}(\mathfrak{s}^{\prime})}}},
\end{align*}
where $P(\mathfrak{s}^{\prime}|\mathfrak{s},\mu,(\tau,\xi))$ is the probability of transiting from state $\mathfrak{s}$ to $\mathfrak{s}'$, given policies $\mu$ and $\tau$. The operators $M_\mu$ and $M$ are characterized in the following lemma. The proof of the lemma can be found in the Appendix.
\begin{lemma}\label{lemma:convergence}
	The sequence of value vectors obtained by composing operators $M_\mu$ and $M$ is convergent.
\end{lemma}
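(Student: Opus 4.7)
The plan is to establish convergence by showing the value iteration sequence is componentwise monotone nondecreasing and uniformly bounded in $[0,1]$, and then to invoke the monotone convergence theorem. By Theorem \ref{Theorem:Equiv}, it suffices to work with the reachability objective, so I would initialize $\mathbf{Q}_0(\mathfrak{s}) = 1$ for $\mathfrak{s} \in \mathcal{E}$ and $\mathbf{Q}_0(\mathfrak{s}) = 0$ otherwise, absorb the GAMEC states at value $1$, and define subsequent iterates by $\mathbf{Q}_{k+1} = M\mathbf{Q}_k$ on the complement of $\mathcal{E}$.

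The argument rests on two structural properties of $M$. First, monotonicity: if $\mathbf{Q} \leq \mathbf{Q}'$ componentwise then $M\mathbf{Q} \leq M\mathbf{Q}'$. This holds because the inner expectation $\sum_{\mathfrak{s}'} P(\mathfrak{s}'|\mathfrak{s},\mu,(\tau,\xi))\mathbf{Q}(\mathfrak{s}')$ is a nonnegative linear functional of $\mathbf{Q}$, so monotonicity is preserved by the inner $\min_{\tau,\xi}$ (yielding $M_\mu$ monotone) and by the outer $\max_\mu$: fix any $\mu^\star$ attaining the max at $\mathbf{Q}$ for a given $\mathfrak{s}$, then $(M\mathbf{Q})(\mathfrak{s}) = (M_{\mu^\star}\mathbf{Q})(\mathfrak{s}) \leq (M_{\mu^\star}\mathbf{Q}')(\mathfrak{s}) \leq (M\mathbf{Q}')(\mathfrak{s})$. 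Second, boundedness: the GDSG analogue of Proposition \ref{Prop: Consistency} ensures that $Pr_\mathcal{Z}(\cdot|\mathfrak{s}, u_C, u_A)$ is a probability distribution, so $(M_\mu\mathbf{Q})(\mathfrak{s})$ is a convex combination of entries of $\mathbf{Q}$, and hence $M$ maps $[0,1]^{|S_\mathcal{Z}|}$ into itself.

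Combining these two properties, a simple induction delivers convergence. The base case $\mathbf{Q}_1 \geq \mathbf{Q}_0$ holds because $\mathbf{Q}_1 \geq 0 = \mathbf{Q}_0$ on the complement of $\mathcal{E}$ while the two vectors agree on $\mathcal{E}$. Assuming $\mathbf{Q}_k \geq \mathbf{Q}_{k-1}$, monotonicity of $M$ gives $\mathbf{Q}_{k+1} = M\mathbf{Q}_k \geq M\mathbf{Q}_{k-1} = \mathbf{Q}_k$. Each coordinate sequence $\{\mathbf{Q}_k(\mathfrak{s})\}_{k \geq 0}$ is therefore nondecreasing and bounded above by $1$, hence convergent by the monotone convergence theorem. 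Collecting the coordinatewise limits yields a vector $\mathbf{Q}^\star \in [0,1]^{|S_\mathcal{Z}|}$ with $\mathbf{Q}_k \to \mathbf{Q}^\star$, which is the content of the lemma.

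The main obstacle I anticipate is justifying that the optimizers inside $M$ are attained at every iteration so that the step $(M\mathbf{Q})(\mathfrak{s}) = (M_{\mu^\star}\mathbf{Q})(\mathfrak{s})$ used to push monotonicity through the outer maximum is rigorous. I would rely on finiteness of $U_C$ and $U_A$ at each state, compactness of the finite-dimensional simplices of mixed policies (including the timing manipulation $\xi$ over the finite set of valuations induced by the discrete duration set $\Delta$), and continuity of the expected-value payoff in the agents' mixed strategies, to conclude that both $\arg\max_\mu$ and $\arg\min_{\tau,\xi}$ are nonempty. A subordinate point is the consistency of absorbing $\mathcal{E}$ with value $1$ rather than iterating on it; this is harmless because once a trajectory enters a GAMEC it remains there by Definition \ref{def: GMEC}, so the absorption matches the reachability semantics justified by Theorem \ref{Theorem:Equiv}.
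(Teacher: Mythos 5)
Your proof is correct and follows essentially the same route as the paper's: both establish that the iterates are bounded in $[0,1]$ (since each application of the operator forms convex combinations of value entries) and monotonically non-decreasing by induction from the initialization on $\mathcal{E}$, pushing monotonicity through the outer $\max_\mu$ by fixing a maximizing policy, and then conclude convergence from monotone boundedness. Your explicit treatment of attainment of the optimizers is a small added refinement, but the decomposition and key idea match the paper's argument.
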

\begin{proof}[Proof of Proposition \ref{proposition:value}]
	We prove by contradiction. 
	Let $\mathbf{Q}$ be a value vector associated with the control policies in SE but Equation \eqref{eq:Stackelberg-Shapley} does not hold. 
	Let $\mathbf{Q}^*$ be the probability of satisfying $\varphi$ under control policies in SE. 
	Since $\mathbf{Q}$ is the value obtained for a control policy $\mu$ and adversary policies $(\tau,\xi)$ that are the best responses to $\mu$, $\mathbf{Q}=M_\mu \mathbf{Q}\leq M\mathbf{Q}$. 
	Composing $M_\mu$ and $M$ $k$ times and as $k \rightarrow \infty$, we have 
	\begin{equation*}
		\mathbf{Q}=\lim_{k\rightarrow\infty}M^k_\mu\mathbf{Q}\leq \lim_{k\rightarrow\infty}M^k\mathbf{Q}=\mathbf{Q}^*,
	\end{equation*}
	where the first and last equalities hold by Lemma \ref{lemma:convergence}, and the inequality follows from definitions of $M$ and $M_\mu$. 
	If $\mathbf{Q}<\mathbf{Q}^*$, we have that the policy $\mu$ is not in SE. 
	If $\mathbf{Q}=\mathbf{Q}^*$, then Equation \eqref{eq:Stackelberg-Shapley} holds, which contradicts the hypothesis.
	
	Now, suppose there exist value vectors $\mathbf{Q}$ and $\mathbf{Q}'$ such that $\mathbf{Q}\neq\mathbf{Q}'$ and their corresponding control policies $\mu$ and $\mu'$ are both in SE. 
	By assumption, we have $\mathbf{Q}=M\mathbf{Q}\geq M_\mu'\mathbf{Q}$. Composing both sides of the inequality $k$ times and letting $k \rightarrow \infty$, from Lemma \ref{lemma:convergence}, $\mathbf{Q}^*=\lim \limits_{k\rightarrow\infty}M^k\mathbf{Q}\geq \lim \limits_{k\rightarrow\infty}M_{\mu'}^k\mathbf{Q}=\mathbf{Q}'$. 
	If $\mathbf{Q}>\mathbf{Q}'$, we have that policy $\mu'$ is not in SE. 
	Thus, we must have $\mathbf{Q}=\mathbf{Q}'$ so that the policies $\mu$ and $\mu'$ are both in SE, which contradicts our initial assumption.
\end{proof}
These results enable determining an optimal control policy using a value-iteration based algorithm. 
Algorithm \ref{algo:value} computes the value vector at each iteration. 
The value vector is updated following Proposition \ref{proposition:value}. Given the optimal value vector $\mathbf{Q}^*$ and the Stackelberg setting, we can extract the optimal defender's policy as the maximizer of $\mathbf{Q}^*$ by solving a linear program.
The convergence of Algorithm \ref{algo:value} is discussed in Theorem \ref{thm:convergence}. The proof uses an inductive argument, and can be found in the Appendix.
%
\begin{algorithm}
	\caption{Computing an optimal control policy.}
	\label{algo:value}
	\begin{algorithmic}[1]
		\Procedure{Control\_Synthesis}{$\mathcal{Z}$}
		\State \textbf{Input}: GDSG $\mathcal{Z}$
		\State \textbf{Output:} value vector $\mathbf{Q}$
		\State \textbf{Initialization:} $\mathbf{Q}^{0} \leftarrow \mathbf{0}$, $\mathbf{Q}^{1}(\mathfrak{s}) \leftarrow 1$ for $\mathfrak{s} \in Acc_\mathcal{Z}$, $\mathbf{Q}^{1}(\mathfrak{s}) \leftarrow 0$ otherwise, $k \leftarrow 0$
		\While{$\max{\{|\mathbf{Q}^{k+1}(\mathfrak{s})-\mathbf{Q}^{k}(\mathfrak{s})| : \mathfrak{s} \in S\}} > \epsilon$}
		\State $k \leftarrow k+1$
		\For{$\mathfrak{s} \notin Acc_\mathcal{Z}$}
		\State $\mathbf{Q}^{k+1}(\mathfrak{s}) \leftarrow\allowbreak
		\max_\mu\min_{\tau,\xi}\allowbreak\bigg\{\sum_{u_{C} \in U_{C}}\sum_{u_{A} \in U_{A}}\sum_{(s',q',\mathbf{v}',y) \in S_\mathcal{Z}}\allowbreak\tau((s,q,\mathbf{v},y),u_{A})\mathbf{Q}((s',q',\mathbf{v}',y'))\allowbreak Pr_\mathcal{Z}\left((s^{\prime},q^\prime,\mathbf{v}',y')|(s,q,\mathbf{v},y), u_C, u_A\right)\bigg\}$
		\EndFor
		\EndWhile
		\State \Return $\mathbf{Q}^k$
		\EndProcedure        	
	\end{algorithmic}
\end{algorithm}
%

\begin{theorem}\label{thm:convergence}
	Algorithm \ref{algo:value} converges in a finite number of iterations. Moreover, the value vector returned by Algorithm \ref{algo:value} is in an $\epsilon$-neighborhood of $\mathbf{Q}^*$.
\end{theorem}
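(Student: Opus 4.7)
The plan is to prove the two conclusions of Theorem \ref{thm:convergence} in three steps: (i) establish monotonicity and boundedness of the iterates produced by Algorithm \ref{algo:value}, (ii) identify the pointwise limit as $\mathbf{Q}^*$ using the results already proved, and (iii) derive the $\epsilon$-closeness guarantee and finite termination from the stopping criterion.

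\textbf{Step 1 (monotonicity and boundedness).} I would first show by induction on $k$ that $\mathbf{0} \leq \mathbf{Q}^{k}(\mathfrak{s}) \leq \mathbf{Q}^{k+1}(\mathfrak{s}) \leq 1$ for every $\mathfrak{s} \in S_\mathcal{Z}$. The base case follows from initialization: $\mathbf{Q}^0 = \mathbf{0} \leq \mathbf{Q}^1$ because $\mathbf{Q}^1$ is nonnegative by construction. For the inductive step, note that the operator $M$ defined before Lemma \ref{lemma:convergence} is monotone in its argument, since $\tau(\cdot)$ and $Pr_\mathcal{Z}(\cdot)$ are nonnegative and both $\max_\mu$ and $\min_{\tau,\xi}$ preserve pointwise inequalities between value vectors. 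Consequently $\mathbf{Q}^k \leq \mathbf{Q}^{k+1}$ implies $M\mathbf{Q}^k \leq M\mathbf{Q}^{k+1}$, i.e., $\mathbf{Q}^{k+1} \leq \mathbf{Q}^{k+2}$. The upper bound of one holds since each iterate represents a reachability probability; values at accepting states remain fixed at $1$, and values elsewhere are convex combinations of iterates already bounded by $1$.

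\textbf{Step 2 (limit identification).} The sequence $\{\mathbf{Q}^k\}$ is componentwise monotone non-decreasing and bounded above by $\mathbf{1}$, so it converges pointwise to some $\hat{\mathbf{Q}}$. Applying Lemma \ref{lemma:convergence}, the composed operator is convergent, so $\hat{\mathbf{Q}}$ must satisfy the fixed-point equation $\hat{\mathbf{Q}} = M\hat{\mathbf{Q}}$, which is precisely Equation \eqref{eq:Stackelberg-Shapley}. Proposition \ref{proposition:value} established that the value vector satisfying this equation is unique, so $\hat{\mathbf{Q}} = \mathbf{Q}^*$.

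\textbf{Step 3 (termination and $\epsilon$-closeness).} Pointwise convergence $\mathbf{Q}^k \to \mathbf{Q}^*$ on the finite state space $S_\mathcal{Z}$ implies uniform convergence, hence for any $\epsilon > 0$ there exists $K$ such that $\max_{\mathfrak{s}} |\mathbf{Q}^{k+1}(\mathfrak{s}) - \mathbf{Q}^{k}(\mathfrak{s})| < \epsilon$ for all $k \geq K$, so the while-loop exits after at most $K$ iterations. For closeness to $\mathbf{Q}^*$, I would exploit the monotonicity established in Step 1 together with the structure of the reachability problem restricted to the transient states (states outside $Acc_\mathcal{Z}$ and outside any closed non-accepting component from which $\mathcal{E}$ is unreachable). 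On these transient states the operator $M$ is a contraction, since each trajectory eventually leaves with positive probability; hence a bound on successive differences transfers to a bound on $\|\mathbf{Q}^k - \mathbf{Q}^*\|_\infty$, yielding the $\epsilon$-neighborhood claim.

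\textbf{Anticipated obstacle.} The main technical subtlety is the last step: without strict contraction, a small successive difference does not in general imply proximity to the fixed point. The resolution is to argue contraction only on the set of transient states, where the stopping probability into the absorbing components is uniformly bounded below by a constant $p>0$ across states and action profiles; this yields an effective contraction factor $1-p$ on that subspace. On accepting states and on states whose value has stabilized at the lower boundary, the iterates are exact, so combining these two facts gives the desired $\epsilon$-neighborhood guarantee.
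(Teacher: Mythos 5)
Your Steps 1 and 2 match the paper's proof almost exactly: the paper also establishes convergence by showing the iterates $\mathbf{Q}^{k+1}=M\mathbf{Q}^k$ are bounded in $[0,1]$ and monotonically non-decreasing (invoking Lemma \ref{lemma:convergence} for the inductive step) and then appeals to monotone convergence; uniqueness of the fixed point via Proposition \ref{proposition:value} identifies the limit as $\mathbf{Q}^*$. Where you genuinely diverge is Step 3. The paper disposes of the $\epsilon$-neighborhood claim in one sentence --- it ``follows from Line 5 of Algorithm \ref{algo:value}'' --- which, as you correctly observe, is not a proof: a stopping rule based on successive differences does not by itself bound the distance to the fixed point of a non-expansive operator. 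Your proposed repair (restrict attention to the transient states outside $Acc_\mathcal{Z}$ and the closed components from which $\mathcal{E}$ is unreachable, and argue contraction there) is the standard and correct way to close this gap, so your write-up is actually more complete than the paper's on this point. One technical caveat: the one-step operator $M$ is generally \emph{not} a contraction on the transient states, because the probability of entering an absorbing component in a single step need not be bounded below uniformly; what holds is that some finite power $M^{N}$ (with $N$ on the order of $|S_\mathcal{Z}|$) is a contraction with factor $1-p$ for some $p>0$. This still converts a bound on successive differences into a bound of order $\epsilon/p$ on $\|\mathbf{Q}^k-\mathbf{Q}^*\|_\infty$, so the conclusion survives (up to a constant the theorem statement does not make precise), but you should state the contraction for $M^{N}$ rather than for $M$ itself.
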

\section{Case Study}\label{Sec:CaseStudy}
In this section, we demonstrate the solution approach of Section \ref{Sec:SolutionApproach} on a signalized traffic network. 
The simulations were carried out using MATLAB$^{\tiny{\textregistered}}$ on a Macbook Pro with a 2.6GHz Intel Core i5 CPU and 8GB RAM. The Appendix contains an example on two-tank system.

\subsection{Signalized Traffic Network Model}
We consider signalized traffic network under the remote control of a transportation management center (TMC). A signalized traffic network consists of a set of links $\{1,2,\dots,L\}$ and intersections $\{1,2,\dots,N\}$ \cite{coogan2015traffic}. 
Each intersection can take a `red' signal which will not allow vehicles to pass through the intersection, or a `green' signal which will allow vehicles to pass. 
The number of vehicles in link $l$ at a time $k$ is $x_l(k)$ and $\Bar{x}_l$ denotes the capacity of link $l$. 
The number of vehicles entering the traffic network at a time $k$ is assumed to follow a Poisson distribution. 
Vehicles can travel through a link if and only if the subsequent intersection in the direction of travel is green. 
The link is then said to be \emph{actuated}. 
We assume that the flow rate $c_l$ of each link $l$ is given and fixed. 
The TMC is given an MITL objective that needs to be satisfied on the network. 
When the TMC issues a green signal at an intersection $n$, the turn ratio $\gamma_{ll'}\in[0,1]$ denotes the fraction of vehicles that will move to link $l'$ from link $l$ through intersection $n$. 
The maximum number of vehicles that can be routed to $l'$ from $l$ is determined by the supply ratio $\alpha_{ll'}$ of link $l'$, which is determined by the remaining capacity $\Bar{x}_{l'}-x_{l'}(k)$ of link $l'$. Given the above parameters, the dynamics of the link queues can be determined \cite{coogan2015traffic}.

We assume there is an adversary who can initiate actuator and timing attacks. An actuator attack will tamper with the traffic signal issued by the TMC. 
For instance, if the TMC actuates a link $l$ at time $k$ and the adversary attacks link $l$, then this link will not be actuated at time $k$. 
A timing attack will manipulate the timing information perceived by the TMC. Hence, any time stamped measurement $\{x_l,k\}$ received by the TMC indicating the number of vehicles at link $l$ at time $k$ might be manipulated to $\{x_l,k'\}$, where $k'$ is the time stamp that has been changed by the adversary.

The signalized traffic network model can be mapped to a DSG in the following way. States of the DSG are obtained by partitioning the number of vehicles on each link (e.g., box partition) \cite{coogan2015traffic}. The control action set at each intersection models which links can be actuated. The action set is then realized by taking the Cartesian product of the action sets at each intersection. 
The realized traffic signal at an intersection is jointly determined by the actions of the TMC and adversary. 
The transition and duration probability distributions between states are obtained from Algorithm \ref{alg:abstract}.

\subsection{Experimental Evaluation}

\begin{figure}[!t]
	\begin{center}
		\begin{tabular}{c}
			\scalebox{0.18}{\includegraphics[width=15in]{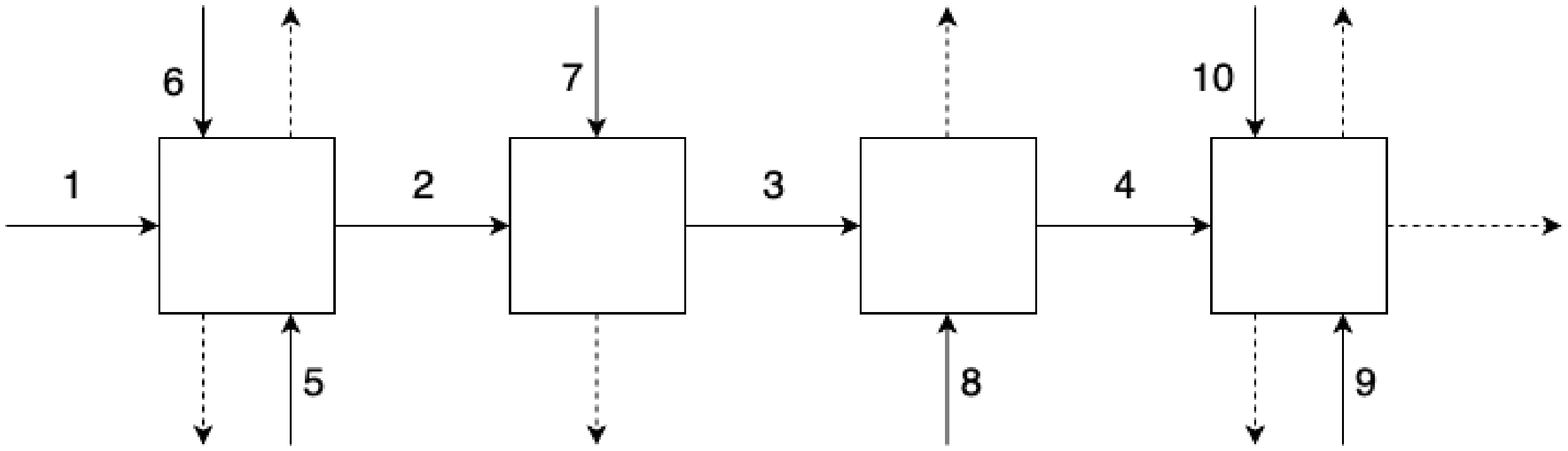}}
		\end{tabular}
		\caption{Representation of a \emph{signalized traffic network}. The network consists of $4$ intersections and $16$ links. Intersections are represented by squares, and links by arrows. Dotted arrows denote outgoing links that are not explicitly modeled.}
		\label{fig:network}
	\end{center}
\end{figure}

A representation of the signalized traffic network is shown in Fig. \ref{fig:network}. The network consists of $4$ intersections (squares) and $16$ links (arrows). 
We denote the intersections with incoming links $1$, $2$, $3$, and $4$ as intersections $1$, $2$, $3$, and $4$, respectively. The links represented by dotted arrows are not explicitly modeled \cite{coogan2015traffic}. For each intersection in Fig. \ref{fig:network}, the links that can be actuated by the TMC are given as follows:
\begin{itemize}
	\item Intersection $1$: $\{\{1\},\{5,6\}\}$;
	\item Intersection $2$: $\{\{2\},\{7\}\}$;
	\item Intersection $3$: $\{\{3\},\{8\}\}$;
	\item Intersection $4$: $\{\{4\},\{9,10\}\}$.
\end{itemize}
We assume that the TMC can actuate exactly one subset of links at each intersection so that no safety constraint will be violated. The link capacities are set to $\Bar{x}_1 = \dots = \Bar{x}_5 = 30$ and $\Bar{x}_6=\dots = \Bar{x}_{10} = 40$. Flow rates associated to each link are set to $c_1 = \dots = c_4 = 10$, $c_5 = \dots = c_{10} = 5$
 \cite{coogan2015traffic}. The supply ratios $\alpha_{ll'} = 1$ for all $l,l'$, 
 and the turn ratios are set to $\gamma_{12} = 0.3$, $\gamma_{23}=\gamma_{34}=\gamma_{52}=\gamma_{62}=\gamma_{73}=\gamma_{84}=0.5$. 
Vehicles entering a link in $(l_1,\dots, l_{10})$ follow a Poisson distribution with mean $(5, 0,0,0,5,5,0,0,5,5)$. We consider a time horizon of length $5$. 
The defender's strategy to detect a timing attack is to compare the deviation between its estimated and observed clock valuations with a pre-specified threshold $\mathbf{e}=2$. In particular, when $\|\boldsymbol{\lambda}-\mathbf{v}\|\leq 2$, hypothesis $\mathcal{H}_0$ holds and no timing attack is detected by the defender. When $\|\boldsymbol{\lambda}-\mathbf{v}\|> 2$, hypothesis $\mathcal{H}_1$ holds and an alarm indicating a timing attack is triggered. In this case, the FSC equipped by the controller has $5$ internal states.
\begin{table}[h!]
	\centering
	\caption{A sample sequence of the traffic light realized at each intersection for the MITL specification $\varphi_3=\Diamond_{[0,5]}\left( (x_2 \leq 10) \land (x_3 \leq 10) \land (x_4 \leq 10)\right)$. The letter `R' represents a `red' signal, and `G' represents `green' signal.}
	\begin{tabular}{|c |c| c| c| c|}
		\hline
	     &\multicolumn{4}{|c|}{Intersection} \\
	     \hline
	Time &  1 &  2 &  3 &  4\\ [0.5ex]
		\hline\hline
	 1 & G & R & G & R\\
	 2 & R & G & R & G\\
	 3 & G & G & R & G\\
	 4 & G & G & G & G\\
	 5 & G & R & G & G\\ [1ex]
		\hline
	\end{tabular}
	\label{table:signal}
\end{table}
\begin{figure}[!h]
	\begin{center}
		\begin{tabular}{c}
			\scalebox{0.18}{\includegraphics[width=15in]{}}
		\end{tabular}
		\caption{Number of vehicles on Links $2$, $3$, and $4$ at each time corresponding to the MITL formula $\varphi_3=\Diamond_{[0,5]}\left( (x_2 \leq 10) \land (x_3 \leq 10) \land (x_4 \leq 10)\right)$. In the presence of an adversary, the defender adopts an FSC-based policy with one realization shown in Table \ref{table:signal}. The dotted horizontal line is the threshold for the maximum number of vehicles allowed ($=10$). The three curves indicate that the number of vehicles in the links satisfies the MITL objective since they are each lower than $10$ before $5$ time units.}
		\label{fig:traffic}
	\end{center}
\end{figure}

The TMC is given one of the following MITL objectives.
\begin{enumerate}
	\item The number of vehicles at link $2$ is eventually below $10$ before deadline $d=5$: $\varphi_1=\Diamond_{[0,5]}(x_2\leq 10)$.
	\item The number of vehicles at link $2$ and $3$ are eventually below $10$ before deadline $d=5$: \\ $\varphi_2=\Diamond_{[0,5]}\left((x_2 \leq 10) \land (x_3 \leq 10) \right)$.
	\item The number of vehicles at link $2$, $3$ and $4$ are eventually below $10$ before deadline $d=5$: \\ $\varphi_3=\Diamond_{[0,5]}\left((x_2 \leq 10) \land (x_3 \leq 10) \land (x_4 \leq 10)\right)$.
\end{enumerate}
Our experiments yield the probability of satisfying each specification as: $\mathbf{\mathbb{P}(\varphi_1)=0.723}$, $\mathbf{\mathbb{P}(\varphi_2)=0.371}$, and $\mathbf{\mathbb{P}(\varphi_3)=0.333}$. These values agree with intuition since $\varphi_3$ being satisfied implies $\varphi_2$ holds true, which in turn implies that $\varphi_1$ is true.

We compare our approach for the objective $\varphi_3$ with two baselines. In the first baseline, the TMC issues periodic green signals for links $1,2,3$, and $4$ at intersections $1,2,3$, and $4$, respectively. In the second baseline, the TMC always issues green signals for links $1,2,3,4$ at intersections $1,2,3,4$.

For the two baseline scenarios, the TMC commits to deterministic strategies. The adversary's actuator attack strategies are as follows. In the first case, the adversary launches actuator attacks when the TMC issues a green signal, and does not attack when the TMC issues a red signal. As a result, the realized traffic signal will be red for all time at every intersection. In the second case, the adversary launches an actuator attack at every time instant. This results in the realized traffic signal being red for all time at each intersection. As a consequence, the number of vehicles in links $2,3$ and $4$ will reach their capacities and the links will be congested for the rest of the time horizon. Therefore, the probabilities of satisfying the MITL specification using the baselines are zero.

Table \ref{table:signal} shows a realization of the traffic signals when the defender adopts an FSC-based policy proposed in Section \ref{Sec:SolutionApproach}. 
Figure \ref{fig:traffic} shows the number of vehicles in each link for this realization. 
The graph indicates that the defender's policy is successful in ensuring that the MITL objective is satisfied. 
Moreover, if the adversary's timing strategy is such that when the difference in the manipulated and actual clock valuations is less than $2$ (the pre-specified threshold), it remains stealthy, even though this is not an explicitly specified goal. 

%
The construction of the DSG using Algorithm \ref{alg:abstract} takes $24.22$ seconds. 
The computation of the global DSG takes $367.9$ seconds. Given the global DSG, Algorithm \ref{algo:value} takes $644.8$ seconds to compute the defender's FSC.

\section{Related Work}\label{Sec:RelatedWork}

Markov decision processes (MDPs) probabilistically represent transitions between states depending on an action taken by an agent.
Semi-Markov decision processes (SMDPs) \cite{jewell1963markov} are used to model Markovian dynamics where the time taken for transitions between states is a random variable. 
SMDPs have been typically used to analyze problems in production scheduling \cite{pinedo2012scheduling,ross2014introduction} and optimization of queues \cite{sennott2009stochastic,tijms2003first,stidham1993survey}. 

Stochastic games (SGs) generalize MDPs to the setting when there is more than one agent taking an action \cite{fudenberg1991game}. 
The satisfaction of an LTL formula for two-player SGs when the players had competing objectives was presented in \cite{niu2018secure,niu2019optimal}. 
In this paper, the authors synthesized a policy for the agent that maximizes the probability of satisfying LTL formula. However, this approach is not applicable to the case where the adversary can launch a timing attack.

Reactive synthesis\footnote{Reactive systems interact continuously with their environments. Reactive synthesis is the construction of a reactive system from a logical specification.} under TL constraints has been studied in \cite{bloem2012synthesis,alur2016compositional,kulkarni2018compositional,raman2015reactive}. 
These approaches typically consider a turn-based setting where the environment is viewed as an adversarial player. In comparison, this paper considers a stochastic environment with the defender and adversary taking their actions simultaneously. Moreover, the works on reactive synthesis assumes that the controller has complete knowledge of the environment, whereas this is not the case in our setting.

Although temporal logic frameworks like LTL have the ability to specify a broad class of system properties, a drawback is that they cannot be used to specify properties that have explicit timing constraints. 
This shortcoming is consequently extended to the FSA that is constructed to represent the LTL formula. 
Timed automata (TA) \cite{alur1994theory} extend FSAs by attaching finitely many clock constraints to each state. 
A transition between any two states will be influenced by the satisfaction of clock constraints in the respective states. 
There has been significant work in the formulation of timed temporal logic frameworks \cite{bouyer2017timed}. 
Metric interval temporal logic (MITL) \cite{alur1996benefits} is one such fragment that allows for the specification of formulas that explicitly depend on time. 
Moreover, an MITL formula can be represented as a TA \cite{alur1996benefits, maler2006mitl} that will have a feasible path in it if and only if the MITL formula is true. 

A parallel body of work proposed the incorporation of probabilities to a TA \cite{beauquier2003probabilistic} to yield a probabilistic timed automaton (PTA). 
Two-player SGs were used as an abstraction of the PTA to present tight bounds on the aforementioned probabilities in \cite{kwiatkowska2009stochastic}. 
Stochastic timed games, defined in \cite{bouyer2009reachability}, assumed two players choosing their actions deterministically, and the environment as a `half-player' whose actions were probabilistic. 
The existence of a strategy for one player such that the probability of reaching a set of states under any strategy of the other player and the randomness in the environment was shown to be undecidable in general. 
The authors of \cite{brazdil2010stochastic} studied a two-player SG and showed that it was not possible for a player to have an optimal strategy that guaranteed the `equilibrium value' against every strategy of the opponent. 
However, they showed the existence of an almost-sure winning strategy for one player against any strategy of the other player. 
This was not a Markovian policy, since it depended on not only the most recent state, but also on previous states. 
In all the papers mentioned here, the games were turn-based, and there was not a temporal logic formula that had to be satisfied. 

The satisfaction of an MITL formula in a motion-planning context was studied in \cite{karaman2008vehicle, liu2014switching, zhou2016timed}. 
However, these works were tailored for a single agent, and did not consider the presence of an adversary. Moreover, the analyses in \cite{karaman2008vehicle, liu2014switching, zhou2016timed} restrict their focus to MITL formulas with reachability accepting conditions. The treatment in this paper is broader in scope, and considers arbitrary MITL formulas. 

FSCs were used to simplify the policy iteration procedure for POMDPs in \cite{hansen1998improved}. 
This approach was extended to explicity carrying out the search for policies in the policy space in order to iteratively improve the FSC in \cite{hansen1998solving}. 
The satisfaction of an LTL formula (for a single agent) in a partially observable environment was presented in \cite{sharan2014finite}. 
This was extended to the setting with an adversary, who also only had partial observation of the environment, and whose goal was to prevent the defender from satisfying the LTL formula in \cite{bhaskar2019finite}. 
These treatments, however, did not account for the presence of timing constraints on the satisfaction of a temporal logic formula. 
\section{Conclusion and Future Work}\label{Sec:Conclusion}

We investigated the problem of synthesizing controllers for time critical CPSs under attack. We proposed durational stochastic games to capture the interaction between the defender and adversary, and also account for time taken for transitions between states. 
The CPS had to satisfy a time-dependent objective specified as an MITL formula. We used a timed automaton representation of the MITL formula, the DSG, and a representation of the defender policy as a finite state controller to synthesize defender policies that would satisfy the MITL objective under actuator and timing attacks carried out by the adversary. 
We evaluated our solution method on a representation of a signalized traffic network.

A formal characterization of spatial and temporal robustness in the presence of an adversary is a topic of future research. 
A second topic of interest is the computational complexity of the control synthesis procedure in adversarial settings. A potential method to reduce this complexity is to use a coarse discretization to synthesize a control strategy \cite{munos2002variable}, and then refine it for the states with relatively poor performance to compute an improved control strategy.
%
	\begin{figure*}[t!]
		\centering
		\begin{subfigure}{.78\columnwidth}
			\includegraphics[width=\columnwidth]{}
			\subcaption{}
			\label{fig:tank2}
		\end{subfigure}\hfill
		\begin{subfigure}{.78\columnwidth}
			\includegraphics[width=\columnwidth]{}
			\subcaption{}
			\label{fig:difference}
		\end{subfigure}\hfill
		\caption{Evaluation on a two-tank system for an MITL specification that requires water levels in the tanks to be at least $0.3$, and to be within $0.1$ of each other, before time $k=5$. An FSC-based defender policy is compared with a baseline policy that does not account for the presence of an adversary. Fig. \ref{fig:tank2} shows the water level in the second tank, using the two policies. The solid line represents the FSC-based policy, while the dashed and dash-dot lines represent the baseline in the presence and absence of the adversary, respectively. The absolute value of the difference between water levels in the two tanks using the two policies is presented in Fig. \ref{fig:difference}. The solid line with circle markers represents the FSC-based policy, while the dashed line and dash-dot line represent the baseline policy under adversarial and benign environments, respectively. We observe that the baseline policy satisfies the MITL objective in the absence of the adversary, but fails to do so when an adversary is present. The FSC-based policy, in contrast, satisfies the MITL objective in the presence of the adversary.}
	\end{figure*}

	\bibliographystyle{IEEEtran}
	\bibliography{IEEEabrv,MyBib}
	\section*{Appendix} \label{Sec:Appendix}

This appendix presents a case-study demonstrating our approach on a two-tank system. We then provide proofs of the some of the results presented in earlier sections.

\subsection{Simulation: Two-Tank System}

We demonstrate our solution approach with simulations carried out on the control of a two-tank system \cite{yordanov2012temporal}. The system is described by $x(k+1)=Ax(k)+Bu(k)+w(k)$, where $x(k)=[x_1(k),x_2(k)]^T$, $u(k)$, and $w(k)$ are state variables representing water levels, control input representing the inflow rate, and stochastic disturbance at time $k$, respectively. The defender transmits a control signal $u_C(k)$ to the actuator through a wireless communication channel. We set the initial levels in the two tanks to $x(0)=[0.11,0.35]^T$.

The system is subject to an attack initiated by an intelligent adversary. The control signal $u(k)$ received by the actuator is compromised as $u(k)=u_C(k)+u_A(k)$ due to the actuator attack, where $u_C(k)$ and $u_A(k)$ correspond to signals sent by the defender and adversary \cite{zhu2011stackelberg}. Due to the timing attack, the time-stamped measurement $\{x,k\}$ indicating the water level at time $k$ is manipulated as $\{x,k'\}$, where $k'$ is the time stamp that has been modified by the adversary.

The state space (water levels in tanks) is partitioned into 49 rectangular regions, i.e., the water level in each tank is divided into $7$ discrete intervals with discretization resolution $0.1$, each representing a state of the DSG. The control and adversary signals are in the ranges $[0,5\times 10^{-4}]$ and $[0,2\times 10^{-4}]$, respectively \cite{yordanov2012temporal}. Control and adversary action sets are obtained by discretization of these sets of inputs. The disturbance $w(k)$ is zero mean i.i.d. Gaussian with covariance $1.5\times 10^{-5}$. The transition and duration probabilities are obtained by Algorithm \ref{alg:abstract}. This procedure took about $18$ seconds. 

The system needs to satisfy an MITL specification given by $\varphi=\Diamond_{[0,5]}\left(\bigvee_z\bigwedge_{i\in\{1,2\}}(z\leq x_i\leq z+0.1)\right)$, where $z\in\{0.3,0.4,0.5,0.6\}$. That is, before time $k=5$, the water levels in the two tanks should lie in the same discretization interval and are each required to be no less than $0.3$. If the MITL specification is satisfied, the difference between water levels in the two tanks should be at most $0.1$.

We compare our FSC-based policy with a baseline. The baseline does not account for the presence of the adversary. 
The results of our experiments are presented in Fig. 3. The baseline is evaluated for scenarios where the adversary is present and the adversary is absent. When there is no adversary, we observe that the baseline policy satisfies the MITL objective (the water levels in the tanks are $0.35$ and $0.30$, and the difference in the levels is $0.05$). However, when this policy is used in the presence of the adversary, we observe that the water level in the second tank falls below $0.3$, and the difference in the levels exceeds $0.1$, thereby violating the specification. This necessitates the use of an alternative control strategy for systems under attacks. Using our approach, we observe that the water levels in the two tanks are $0.33$ and $0.30$, and the difference in the levels is $0.03$. Moreover, these water levels are attained before the required deadline of $k=5$, which satisfies the MITL objective.

\subsection{Proofs}
%
\begin{proof}[Proof of Proposition \ref{Prop: Consistency}]
	For any transition in $\mathcal{P}$, $Pr\left((s^{\prime},q^\prime,\mathbf{v}')|(s,q,\mathbf{v}), u_C, u_A\right)\in[0,1]$. 
	This is due to the fact that $T_\mathcal{G}(\delta|s,u_C,u_A,s')\in[0,1]$ and $Pr_\mathcal{G}(s'|s,u_C,u_A)\in[0,1]$. 
	Moreover, \emph{i)} $Pr\left((s^{\prime},q^\prime,\mathbf{v}')|(s,q,\mathbf{v}), u_C, u_A\right)=0$ iff $T_\mathcal{G}(\delta|s,u_C,u_A,s')=0$, or $Pr_\mathcal{G}(s'|s,u_C,u_A)=0$, or both; 
	\emph{ii)} $Pr\left((s^{\prime},q^\prime,\mathbf{v}')|(s,q,\mathbf{v}), u_C, u_A\right)=1$ iff $T_\mathcal{G}(\delta|s,u_C,u_A,s')=1$ and $Pr_\mathcal{G}(s'|s,u_C,u_A)=1$. 
	Let $I_{(q,\mathbf{v}),(q',\mathbf{v}')}^\delta\coloneqq\mathbbm{1}((q,\mathbf{v})\xrightarrow{L(s'),\delta}(q',\mathbf{v}'))$ be an indicator function that takes value $1$ if its argument is true, and $0$ otherwise. 
	Then, Equation \eqref{eq:well defined prob} can be rewritten as:
	\begin{align}
		&\sum_{(s',q',\mathbf{v}')}T_\mathcal{G}(\delta|s,u_C,u_A,s')Pr_\mathcal{G}(s'|s,u_C,u_A)\label{eq:well defined prob deriv 1}\\
		&=\sum_{s'\in S_\mathcal{G}}\sum_{\delta\in\Delta}T_\mathcal{G}(\delta|s,u_C,u_A,s')I_{(q,\mathbf{v}),(q',\mathbf{v}')}^\delta\nonumber\\ &\quad\quad\quad\quad\quad\quad\quad\quad\quad Pr_\mathcal{G}(s'|s,u_C,u_A)\label{eq:well defined prob deriv 2}\\
		&=1,\label{eq:well defined prob deriv 3}
	\end{align}
	Equation \eqref{eq:well defined prob deriv 1} holds by substituting from Equation \eqref{eq:transition prob}, Equation \eqref{eq:well defined prob deriv 2} follows from Definition \ref{def:PDSG} 
	and $Pr_\mathcal{G}(s'|s,u_C,u_A)>0$, and Equation \eqref{eq:well defined prob deriv 3} results by observing that $\sum_{\delta\in\Delta}T_\mathcal{G}(\delta|s,u_C,u_A,s')=1$ and $\sum_{s'\in S_\mathcal{G}}Pr_\mathcal{G}(s'|s,u_C,u_A)=1$.
\end{proof}

\begin{proof}[Proof of Proposition \ref{proposition:correctness}]
	
	We prove the correctness of Algorithm \ref{algo:GAMEC} by first showing that no state or control action that belongs to a GEC will be removed. 
	Consider a GEC $(N,D)$.
	
	If there exists a state $\mathfrak{s}$ and control action $u_C\in U_C(\mathfrak{s})$ such that $Pr(\mathfrak{s}'|\mathfrak{s},u_C,u_A)>0$ for all $u_A$ and $\mathfrak{s}'\in N$, then according to lines 10 - 17 of Algorithm \ref{algo:GAMEC}, state $\mathfrak{s}$ and control action $u_C$ will not be removed since $D(\mathfrak{s})\neq\emptyset$. Therefore, Algorithm \ref{algo:GAMEC} never removes states or actions from a GEC.
	
	On the other hand, if there is a state $\mathfrak{s}\in N$ such that $D(\mathfrak{s})=\emptyset$, then $\mathfrak{s}$ will be removed (lines 10 - 17 of Algorithm \ref{algo:GAMEC}). Moreover, any state that can be steered to $\mathfrak{s}$ under some adversary action $u_A$ will also be removed (lines 18 - 26). Thus, any state or action that does not belong to GEC will be removed by Algorithm \ref{algo:GAMEC}, and the remaining states in $(N,D)$ after executions from lines 10 - 26 will form the GEC.
	
	Combining the arguments above, we have that Algorithm \ref{algo:GAMEC} computes a set of GECs $\{(C_i,D_i)\}_{i\geq 1}$ such that any GEC is contained by some $(C_i,D_i)$. Then by Definition \ref{def: GMEC} and line 35 of Algorithm \ref{algo:GAMEC}, we have that the result returned by Algorithm \ref{algo:GAMEC} is the set of GAMECs.
\end{proof}

\begin{proof}[Proof of Lemma \ref{lemma:convergence}]
	Given a control policy $\mu$, the GDSG $\mathcal{Z}$ is reduced to an MDP, $\mathcal{M}$. 
	Then, the composition of $M_\mu$ corresponds to a value iteration on $\mathcal{M}$. 
	The convergence of $M_\mu$ can be shown following the approach in \cite{bertsekas1995dynamic}. 
	
	Next, we show that the sequence obtained by composing $M$ is bounded and monotone.
	We observe that $M\mathbf{Q}(\mathfrak{s})$ is a convex combinations of all the neighboring states of $\mathfrak{s}$. 
	Moreover, $\mathbf{Q}(\mathfrak{s})\in[0,1]$ for all $\mathfrak{s}$, and is therefore bounded.
	We show that the sequence of value vectors is monotonically non-decreasing by induction. 
	Define $M^{-1}\mathbf{Q}:=\mathbf{0}$, and $M^{0}\mathbf{Q}(\mathfrak{s})=0$ for $\mathfrak{s} \notin \mathcal{E}$, and $M^{0}\mathbf{Q}(\mathfrak{s})=1$ for $\mathfrak{s} \in \mathcal{E}$. 
	Then, $M^{-1}\mathbf{Q}\leq M^{0}\mathbf{Q}$. 
	Suppose the sequence of value vectors is monotonically non-decreasing up to iteration $k$. We have
	\begin{align}
		&M^{k+1}\mathbf{Q}(\mathfrak{s})\nonumber\\
		\geq& \min_{\tau,\xi}\bigg\{\sum_{u_{C} \in U_{C}}\sum_{u_{A} \in U_{A}}\sum_{(s',q',\mathbf{v}',y) \in S_\mathcal{Z}}\nonumber\\
		&\tau((s,q,\mathbf{v},y),u_{A})\mathbf{Q}((s',q',\mathbf{v}',y'))\nonumber\\ 
		&Pr_\mathcal{Z}\left((s^{\prime},q^\prime,\mathbf{v}',y')|(s,q,\mathbf{v},y), u_C, u_A\right)\bigg\}\label{eq:monotone 1}\\
		\geq& \min_{\tau,\xi}\bigg\{\sum_{u_{C} \in U_{C}}\sum_{u_{A} \in U_{A}}\sum_{(s',q',\mathbf{v}',y) \in S_\mathcal{Z}}\nonumber\\
		&\tau((s,q,\mathbf{v},y),u_{A})\mathbf{Q}((s',q',\mathbf{v}',y'))\nonumber\\ 
		&Pr_\mathcal{Z}^{k-1}\left((s^{\prime},q^\prime,\mathbf{v}',y')|(s,q,\mathbf{v},y), u_C, u_A\right)\bigg\}\label{eq:monotone 2}\\
		=& M^k\mathbf{Q}(\mathfrak{s}),\label{eq:monotone 3}
	\end{align}
	where $Pr_\mathcal{Z}^{k-1}\left((s^{\prime},q^\prime,\mathbf{v}',y')|(s,q,\mathbf{v},y), u_C, u_A\right)$ is obtained by substituting $\mu^{k-1}(g',u_C|g,\mathbf{v})$ into \eqref{eq:GDSG transition prob}, inequality \eqref{eq:monotone 1} holds since $M^{k+1}\mathbf{Q}$ corresponds to a maximizing policy $\mu^{k+1}$, \eqref{eq:monotone 2} holds by induction, and \eqref{eq:monotone 3} follows from the construction of $\mu^k$. 
	Therefore, $\mathbf{Q}^{k+1}\geq\mathbf{Q}^k$, implying the sequence of value vectors is monotonically non-decreasing.
	From the boundedness and monotonocity of $M\mathbf{Q}$, the sequence is a Cauchy sequence that converges to a value $\mathbf{Q}^*$.
\end{proof}

\begin{proof}[Proof of Theorem \ref{thm:convergence}]
	We prove convergence by by showing that the sequence of value vectors computed in Algorithm \ref{algo:value} is bounded and monotonically non-decreasing. 
	Line 4 of Algorithm \ref{algo:value} serves as our induction base, i.e., $\mathbf{Q}^1\geq \mathbf{Q}^0$. 
	Line 8 of Algorithm \ref{algo:value} is equivalent to computing $\mathbf{Q}^{k+1}$ as $\mathbf{Q}^{k+1}=M\mathbf{Q}^k$. From Lemma \ref{lemma:convergence}, $\mathbf{Q}^{k+1}\geq \mathbf{Q}^k$. Convergence follows from the Monotone Convergence theorem \cite{royden2010real}.
	%
	That the control policy is within an $\epsilon$-neighborhood of SE follows from Line 5 of Algorithm \ref{algo:value}.
\end{proof}


\end{document}